\documentclass{llncs}
\usepackage{algorithm}
\usepackage{algorithmic}
\usepackage{amssymb}
\usepackage{amsmath}
\usepackage{bigints}
\usepackage{esint}
\usepackage{array}
\usepackage{bibunits}
\usepackage{epsfig}
\usepackage{float}
\usepackage{palatino}
\usepackage{pst-all}
\usepackage{pstricks}
\usepackage{tabularx}
\usepackage{makeidx}
\usepackage{gastex}
\usepackage{listings}
\usepackage{sidecap}
\usepackage{paralist}
\usepackage{wrapfig}
\usepackage{pstricks,pst-plot,pstricks-add}
\usepackage{graphicx}
\usepackage{graphics}
\usepackage{stmaryrd}
\def\A{{\cal A}}

\def\Li{{\rm {Li}}}
\definecolor{said}{rgb}{0,0,.5}
\def\dedge{\ncline[linestyle=dotted]}

\newcommand\sym{\fam\comfam\com}
\font\tensym=msbm10 at 12pt \font\sevensym=msbm7
\font\fivesym=msbm5 
\newfam\symfam
\textfont\symfam=\tensym \scriptfont\symfam=\sevensym
\scriptscriptfont\symfam=\fivesym
\renewcommand\sym{\fam\symfam\relax}

\def\build#1_#2^#3{\mathrel{\mathop{\kern 0pt#1}\limits_{#2}^{#3}}}
\def\tend#1#2{\build\hbox to 12mm{\rightarrowfill}_{#1\rightarrow #2}^{ }}

\def\tendn{\tend{n}{\infty}}
\def\converge#1#2#3#4{\build\hbox to
#1mm{\rightarrowfill}_{#2\rightarrow #3}^{\hbox{\scriptsize #4}}}

\newcommand\Z{{\sym Z}}

\newcommand\R{{\sym R}}
\newcommand\C{{\sym C}}

\newcommand\Se{{\sym S}}
\begin{document}
\frontmatter     
\pagestyle{headings} 
\addtocmark{Hamiltonian Mechanics} 
\lstset{language=C,basicstyle=\ttfamily\small,numbers=right,numberstyle=\tiny,frame=tb}
\mainmatter
\title{On the transition reduction problem for finite automata}
\titlerunning{On transition automata reduction problem}
\author{el Houcein el Abdalaoui\inst{3}, Mohamed Dahmoune \inst{1} \and Djelloul Ziadi\inst{2}}
\institute
{
LACL, EA 4912, Universit\'e Paris-Est Cr\'eteil (UPEC), France
\email{Mohamed.Dahmoune@u-pec.fr}
\and
LITIS, Universit\'e de Rouen, France
\email{Djelloul.Ziadi@univ-rouen.fr}
\and
LMRS, Universit\'e de Rouen, France
\email{elhoucein.elabdalaoui@univ-rouen.fr}
}
\maketitle       
\begin{abstract}
We are interested in the problem of transition reduction of nondeterministic automata. We present some results on the reduction of the automata recognizing the language  $L(E_n)$ denoted by the regular expression $E_n=(1+\varepsilon)\cdot (2+\varepsilon)\cdot (3+\varepsilon)\cdots (n+\varepsilon)$. These results can be used in the general case of the transition reduction problem.
\end{abstract}
\section{Introduction}\label{Intro}
Minimizing the number of states of an automaton is a subject that has been studied extensively since the 1950s, both in the deterministic case and the nondeterministic case \cite{Moo,Hop}. However, works on the minimization of the number of transitions have appeared recently.

In 1997, J.~Hromkovi\~c {\it et al.} \cite{Hromkovic} have proposed an algorithm based on the concept of Common Follow Set of a regular expression, that converts a regular expression of size $n$ into a finite state automaton with $O(n)$ states, $O(n\log n)$ transitions as lower bound and $O(n \log^2 n)$ transitions as upper bound. Muscholl {\it et al.} \cite{Muscholl}, showed that this algorithm can be implemented in time $O(n \log^2 n)$. In \cite{1393793} Ouardi and Ziadi, based on the ZPC structure \cite{1516204}, gave an $O(n \log^2 n)$ algorithm to convert a weighted regular expression of size $n$ into a weighted automaton having $O(n)$ states and $O(n \log^2 n)$ transitions. In \cite{DBLP:journals/jcss/Geffert03}, Viliam Geffert showed that every regular expression of size $n$ over a fixed alphabet of $s$ symbols can be converted into a nondeterministic $\varepsilon$-free finite state automaton with $O(sn\log n)$ transitions.

Lower bound was improved by Yuri Lifshits \cite{Yuri} to $\Omega(\frac{n\log^2 n}{\log\log~n})$, after, Schnitger \cite{Schnitger06} improved it to $\Omega(n \log^2 n)$  transitions.

In \cite{Cox}, R. Cox has done an exhaustive search to find the transition minimal automata of  $L(E_n)$ for $n=1 \mbox{ to } 7$. He has also used an heuristic approach that construct transition reduced automata for $n=8 \mbox{ to } 10$.

Here, we are able to produce an algorithm for which the number of transitions is minimal for $L(E_n)$ languages class, in the sense that, asymptotically, this number of transitions is equivalent to $n \log^2 n$ (see Section~\ref{Asymptote}).

We mention that most of complexity results mentioned above are obtained from the study of  $L(E_n)$ languages class. This class of languages corresponds to a simple class of automata, in which, the minimization of the number of transitions is difficult and not obvious. The study of this class of languages, can also find its application in bioinformatics, since that $L(E_n)$ is exactly the set of all sub-sequences of the word $1.2.3\dots n$.

Our approach to reduce the number of transitions of a nondeterministic homogeneous finite state automaton is based on the decomposition of the transition table of the automaton into blocks. This decomposition is based on the concept of Common Follow Sets. From a block decomposition we construct an automaton with less transitions than the initial automaton. See Figure~\ref{F0}.

The main problem in our approach is to find a good block decomposition of the transition table (even the best one). In the case where this matrix is lower triangular or upper triangular, finding a minimal decomposition block leading to a minimal transition automaton, is not evident. Our study is focused on the upper triangular matrix, which corresponds to the transition table of the deterministic minimal automaton recognizing the  $L(E_n)$ language. The case of lower triangular matrix can be obtained in a similar manner.
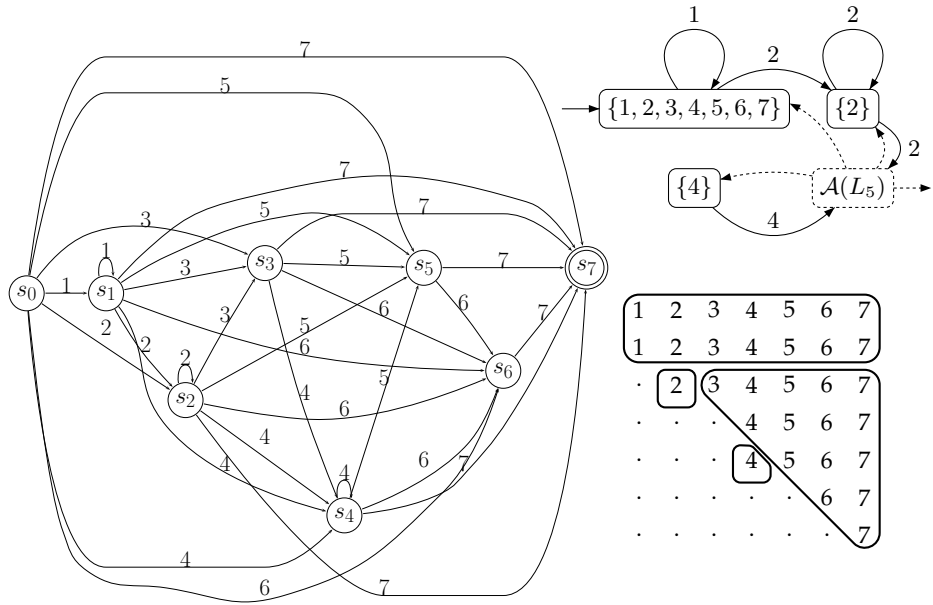
\begin{figure}[H]
\begin{minipage}{.66\textwidth}
\resizebox{\textwidth}{!}
{
\begin{pspicture}[linewidth=1bp](0bp,0bp)(636bp,593.19bp)

 \pstVerb{2 setlinejoin} 
\Huge%
\psset{linecolor=black}
 \psbezier[arrows=->](289.72bp,351.17bp)(333.19bp,331.51bp)(445.31bp,280.78bp)(506.3bp,253.2bp)
 \psset{linecolor=[rgb]{0.0,0.0,0.0}}
 \rput(398bp,311.19bp){$6$}
 \psset{linecolor=black}
 \psbezier[arrows=->](196.97bp,231.3bp)(204.51bp,245.56bp)(215.7bp,266.35bp)(226bp,284.19bp)(235.71bp,301.01bp)(247.13bp,319.62bp)(261.7bp,342.95bp)
 \psset{linecolor=[rgb]{0.0,0.0,0.0}}
 \rput(230bp,304.19bp){$3$}
 \psset{linecolor=black}
 \psbezier[arrows=->](117.15bp,313.08bp)(119.64bp,309.97bp)(122.07bp,306.58bp)(124bp,303.19bp)(151.46bp,254.89bp)(132.2bp,228.67bp)(168bp,186.19bp)(210.27bp,136.02bp)(285.88bp,109.96bp)(336.87bp,96.388bp)
 \psset{linecolor=[rgb]{0.0,0.0,0.0}}
 \rput(230bp,147.19bp){$4$}
 \psset{linecolor=black}
 \psbezier[arrows=->](23.271bp,346bp)(32.821bp,397.89bp)(62.629bp,539.19bp)(104bp,539.19bp)(104bp,539.19bp)(104bp,539.19bp)(314bp,539.19bp)(343.55bp,539.19bp)(355.46bp,536.43bp)(376bp,515.19bp)(417.24bp,472.55bp)(395.4bp,443.17bp)(420bp,389.19bp)(421.47bp,385.97bp)(423.13bp,382.66bp)(429.8bp,370.6bp)
 \psset{linecolor=[rgb]{0.0,0.0,0.0}}
 \rput(230bp,547.19bp){$5$}
 \psset{linecolor=black}
 \psbezier[arrows=->](361.91bp,110.61bp)(376.4bp,155.8bp)(413.77bp,272.38bp)(434.1bp,335.79bp)
 \psset{linecolor=[rgb]{0.0,0.0,0.0}}
 \rput(398bp,238.19bp){$5$}
 \psset{linecolor=black}
 \psbezier[arrows=->](123.27bp,330.86bp)(152.56bp,336.44bp)(208.49bp,347.09bp)(252.69bp,355.51bp)
 \psset{linecolor=[rgb]{0.0,0.0,0.0}}
 \rput(188bp,354.19bp){$3$}
 \psset{linecolor=black}
 \psbezier[arrows=->](39.907bp,327.19bp)(50.054bp,327.19bp)(62.607bp,327.19bp)(84.236bp,327.19bp)
 \psset{linecolor=[rgb]{0.0,0.0,0.0}}
 \rput(62bp,335.19bp){$1$}
 \psset{linecolor=black}
 \psbezier[arrows=->](203.81bp,202.71bp)(233.32bp,181.28bp)(296.99bp,135.04bp)(340.11bp,103.72bp)
 \psset{linecolor=[rgb]{0.0,0.0,0.0}}
 \rput(272bp,175.19bp){$4$}
 \psset{linecolor=black}
 \psbezier[arrows=->](116.8bp,342.18bp)(123.92bp,350.21bp)(133.11bp,360.08bp)(142bp,368.19bp)(183.14bp,405.72bp)(197.79bp,415.43bp)(252bp,428.19bp)(378.33bp,457.92bp)(419.36bp,461.38bp)(544bp,425.19bp)(556.6bp,421.53bp)(560.09bp,419.79bp)(570bp,411.19bp)(579.33bp,403.09bp)(587.81bp,392.56bp)(600.26bp,374.41bp)
 \psset{linecolor=[rgb]{0.0,0.0,0.0}}
 \rput(356bp,458.19bp){$7$}
 \psset{linecolor=black}
 \psbezier[arrows=->](459.73bp,354.19bp)(488.6bp,354.19bp)(542.75bp,354.19bp)(588.39bp,354.19bp)
 \psset{linecolor=[rgb]{0.0,0.0,0.0}}
 \rput(524bp,362.19bp){$7$}
 \psset{linecolor=black}
 \psbezier[arrows=->](276.3bp,340.24bp)(282.77bp,312.37bp)(295.78bp,258.82bp)(310bp,214.19bp)(320.51bp,181.2bp)(334.88bp,144.09bp)(348.54bp,110.17bp)
 \psset{linecolor=[rgb]{0.0,0.0,0.0}}
 \rput(314bp,222.19bp){$4$}
 \psset{linecolor=black}
 \psbezier[arrows=->](21.9bp,308.21bp)(27.142bp,257.38bp)(42.448bp,119.19bp)(58bp,76.188bp)(65.596bp,55.187bp)(65.811bp,46.147bp)(84bp,33.188bp)(115.1bp,11.034bp)(130.27bp,21.008bp)(168bp,15.188bp)(222.58bp,6.7707bp)(239.05bp,-8.4853bp)(292bp,7.1885bp)(377.35bp,32.449bp)(396.67bp,53.646bp)(460bp,116.19bp)(488.67bp,144.51bp)(506.61bp,188.31bp)(519.03bp,226.65bp)
 \psset{linecolor=[rgb]{0.0,0.0,0.0}}
 \rput(272bp,15.188bp){$6$}
 \psset{linecolor=black}
 \psbezier[arrows=->](374.78bp,98.73bp)(402.46bp,109.22bp)(454.33bp,132.24bp)(486bp,167.19bp)(499.07bp,181.61bp)(508.62bp,201.34bp)(518.36bp,226.75bp)
 \psset{linecolor=[rgb]{0.0,0.0,0.0}}
 \rput(440bp,151.19bp){$6$}
 \psset{linecolor=black}
 \psbezier[arrows=->](205.32bp,223.81bp)(248.76bp,247.95bp)(362.55bp,311.16bp)(422.89bp,344.68bp)
 \psset{linecolor=[rgb]{0.0,0.0,0.0}}
 \rput(314bp,293.19bp){$5$}
 \psset{linecolor=black}
 \psbezier[arrows=->](291.6bp,358.61bp)(320.73bp,357.74bp)(375.69bp,356.1bp)(420.26bp,354.78bp)
 \psset{linecolor=[rgb]{0.0,0.0,0.0}}
 \rput(356bp,365.19bp){$5$}
 \psset{linecolor=black}
 \psbezier[arrows=->](122.1bp,319.94bp)(157.53bp,306.1bp)(238.77bp,276.07bp)(310bp,262.19bp)(374.95bp,249.53bp)(452.74bp,246.25bp)(504.33bp,245.27bp)
 \psset{linecolor=[rgb]{0.0,0.0,0.0}}
 \rput(314bp,270.19bp){$6$}
 \psset{linecolor=black}
 \psbezier[arrows=->](35.994bp,316.08bp)(48.91bp,307.14bp)(67.575bp,294.28bp)(84bp,283.19bp)(111.02bp,264.94bp)(142.22bp,244.3bp)(171.73bp,224.86bp)
 \psset{linecolor=[rgb]{0.0,0.0,0.0}}
 \rput(104bp,291.19bp){$2$}
 \psset{linecolor=black}
 \psbezier[arrows=->](536.24bp,260.34bp)(550.33bp,277.8bp)(573.73bp,306.78bp)(597.23bp,335.89bp)
 \psset{linecolor=[rgb]{0.0,0.0,0.0}}
 \rput(566bp,307.19bp){$7$}
 \psset{linecolor=black}
 \psbezier[arrows=->](452.9bp,339.85bp)(462.36bp,329.15bp)(475.35bp,314.06bp)(486bp,300.19bp)(493.47bp,290.46bp)(501.25bp,279.4bp)(513.44bp,261.38bp)
 \psset{linecolor=[rgb]{0.0,0.0,0.0}}
 \rput(482bp,317.19bp){$6$}
 \psset{linecolor=black}
 \psbezier[arrows=->](22.238bp,346.39bp)(29.588bp,404.93bp)(55.351bp,577.19bp)(104bp,577.19bp)(104bp,577.19bp)(104bp,577.19bp)(524bp,577.19bp)(564.33bp,577.19bp)(594.03bp,449.4bp)(607.99bp,377.21bp)
 \psset{linecolor=[rgb]{0.0,0.0,0.0}}
 \rput(314bp,585.19bp){$7$}
 \psset{linecolor=black}
 \psbezier[arrows=->](375.78bp,93.824bp)(397.47bp,96.267bp)(432.82bp,102.19bp)(460bp,116.19bp)(473.72bp,123.25bp)(475.56bp,127.83bp)(486bp,139.19bp)(515.22bp,171bp)(521.42bp,180.36bp)(544bp,217.19bp)(565.48bp,252.22bp)(585.84bp,294.96bp)(602.76bp,332.67bp)
 \psset{linecolor=[rgb]{0.0,0.0,0.0}}
 \rput(482bp,147.19bp){$7$}
 \psset{linecolor=black}
 \psbezier[arrows=->](207.32bp,210.02bp)(231.27bp,205.09bp)(273.42bp,197.17bp)(310bp,194.19bp)(378.61bp,188.6bp)(456.68bp,216.16bp)(506.26bp,236.97bp)
 \psset{linecolor=[rgb]{0.0,0.0,0.0}}
 \rput(356bp,205.19bp){$6$}
 \psset{linecolor=black}
 \psbezier[arrows=->](96.502bp,345.08bp)(95.102bp,355.03bp)(97.602bp,364.19bp)(104bp,364.19bp)(108.1bp,364.19bp)(110.6bp,360.43bp)(111.5bp,345.08bp)
 \psset{linecolor=[rgb]{0.0,0.0,0.0}}
 \rput(104bp,372.19bp){$1$}
 \psset{linecolor=black}
 \psbezier[arrows=->](30.783bp,343.17bp)(42.034bp,358.28bp)(61.201bp,379.96bp)(84bp,389.19bp)(138.42bp,411.22bp)(207.83bp,388.24bp)(254.45bp,367.97bp)
 \psset{linecolor=[rgb]{0.0,0.0,0.0}}
 \rput(146bp,405.19bp){$3$}
 \psset{linecolor=black}
 \psbezier[arrows=->](285.01bp,373.67bp)(300.34bp,389.06bp)(327.36bp,411.19bp)(356bp,411.19bp)(356bp,411.19bp)(356bp,411.19bp)(524bp,411.19bp)(549.19bp,411.19bp)(572.85bp,394.54bp)(596.75bp,371.91bp)
 \psset{linecolor=[rgb]{0.0,0.0,0.0}}
 \rput(440bp,419.19bp){$7$}
 \psset{linecolor=black}
 \psbezier[arrows=->](180.5bp,232.08bp)(179.1bp,242.03bp)(181.6bp,251.19bp)(188bp,251.19bp)(192.1bp,251.19bp)(194.6bp,247.43bp)(195.5bp,232.08bp)
 \psset{linecolor=[rgb]{0.0,0.0,0.0}}
 \rput(188bp,259.19bp){$2$}
 \psset{linecolor=black}
 \psbezier[arrows=->](119.98bp,338.42bp)(132.58bp,346.9bp)(150.83bp,358.37bp)(168bp,366.19bp)(228.26bp,393.61bp)(244.57bp,401.07bp)(310bp,411.19bp)(349.87bp,417.35bp)(391.9bp,392.33bp)(425.19bp,367.13bp)
 \psset{linecolor=[rgb]{0.0,0.0,0.0}}
 \rput(272bp,416.19bp){$5$}
 \psset{linecolor=black}
 \psbezier[arrows=->](113.21bp,310.33bp)(120.45bp,297.55bp)(131.12bp,279.75bp)(142bp,265.19bp)(149.73bp,254.85bp)(159.15bp,244.13bp)(174.38bp,227.88bp)
 \psset{linecolor=[rgb]{0.0,0.0,0.0}}
 \rput(146bp,273.19bp){$2$}
 \psset{linecolor=black}
 \psbezier[arrows=->](21.365bp,308.16bp)(26.509bp,243.64bp)(47.39bp,37.188bp)(104bp,37.188bp)(104bp,37.188bp)(104bp,37.188bp)(272bp,37.188bp)(297.17bp,37.188bp)(320.66bp,54.682bp)(343.43bp,77.16bp)
 \psset{linecolor=[rgb]{0.0,0.0,0.0}}
 \rput(188bp,45.188bp){$4$}
 \psset{linecolor=black}
 \psbezier[arrows=->](348.5bp,110.08bp)(347.1bp,120.03bp)(349.6bp,129.19bp)(356bp,129.19bp)(360.1bp,129.19bp)(362.6bp,125.43bp)(363.5bp,110.08bp)
 \psset{linecolor=[rgb]{0.0,0.0,0.0}}
 \rput(356bp,137.19bp){$4$}
 \psset{linecolor=black}
 \psbezier[arrows=->](199.42bp,198.29bp)(234.82bp,149.82bp)(343.61bp,7.1885bp)(398bp,7.1885bp)(398bp,7.1885bp)(398bp,7.1885bp)(524bp,7.1885bp)(588.36bp,7.1885bp)(606.18bp,234.31bp)(611.13bp,331.15bp)
 \psset{linecolor=[rgb]{0.0,0.0,0.0}}
 \rput(398bp,16.188bp){$7$}
{%
 \psset{linecolor=[rgb]{0.0,0.0,0.0}}
 \psellipse[](272bp,359bp)(19bp,19bp)
 \rput(272bp,359.19bp){$s_{3}$}
}%
{%
 \psset{linecolor=[rgb]{0.0,0.0,0.0}}
 \psellipse[](188bp,214bp)(19bp,19bp)
 \rput(188bp,214.19bp){$s_{2}$}
}%
{%
 \psset{linecolor=[rgb]{0.0,0.0,0.0}}
 \psellipse[](104bp,327bp)(19bp,19bp)
 \rput(104bp,327.19bp){$s_{1}$}
}%
{%
 \psset{linecolor=[rgb]{0.0,0.0,0.0}}
 \psellipse[](20bp,327bp)(19bp,19bp)
 \rput(20bp,327.19bp){$s_{0}$}
}%
{%
 \psset{linecolor=[rgb]{0.0,0.0,0.0}}
 \psellipse[](612bp,354bp)(19bp,19bp)
 \psellipse[](612bp,354bp)(23bp,23bp)
 \rput(612bp,354.19bp){$s_{7}$}
}%
{%
 \psset{linecolor=[rgb]{0.0,0.0,0.0}}
 \psellipse[](524bp,245bp)(19bp,19bp)
 \rput(524bp,245.19bp){$s_{6}$}
}%
{%
 \psset{linecolor=[rgb]{0.0,0.0,0.0}}
 \psellipse[](440bp,354bp)(19bp,19bp)
 \rput(440bp,354.19bp){$s_{5}$}
}%
{%
 \psset{linecolor=[rgb]{0.0,0.0,0.0}}
 \psellipse[](356bp,92bp)(19bp,19bp)
 \rput(356bp,92.188bp){$s_{4}$}
}%
\end{pspicture}
}
\end{minipage}
\hfill
\begin{minipage}{.31\textwidth}
{
\unitlength=3pt
\begin{picture}(25,25)(-7,-15)
\gasset{Nadjust=wh,Nmr=1}
\node[Nmarks=i](q0)(0,10){$\{1,2,3,4,5,6,7\}$}
\node(q1)(20,10){$\{2\}$}
\node(q2)(0,0){$\{4\}$}
\node[dash={0.4 0.4}0,Nmarks=f](q3)(20,0){$\A(L_5)$}
\drawloop(q0){$1$}
\drawloop(q1){$2$}
\drawedge[curvedepth=5](q0,q1){$2$}
\drawedge[curvedepth=6](q1,q3){$2$}
\drawedge[curvedepth=-6](q2,q3){$4$}
\drawedge[dash={0.4 0.4}0,curvedepth=-6](q3,q0){}
\drawedge[dash={0.4 0.4}0,curvedepth=-4](q3,q1){}
\drawedge[dash={0.4 0.4}0,curvedepth=-2](q3,q2){}
\end{picture}
}
{
\begin{pspicture}(3,3)
\rput(0.0,3.0){1}
\rput(0.5,3.0){2}
\rput(1.0,3.0){3}
\rput(1.5,3.0){4}
\rput(2.0,3.0){5}
\rput(2.5,3.0){6}
\rput(3.0,3.0){7}

\rput(0.0,2.5){1}
\rput(0.5,2.5){2}
\rput(1.0,2.5){3}
\rput(1.5,2.5){4}
\rput(2.0,2.5){5}
\rput(2.5,2.5){6}
\rput(3.0,2.5){7}

\rput(0.0,2.0){.}
\rput(0.5,2.0){2}
\rput(1.0,2.0){3}
\rput(1.5,2.0){4}
\rput(2.0,2.0){5}
\rput(2.5,2.0){6}
\rput(3.0,2.0){7}

\rput(0.0,1.5){.}
\rput(0.5,1.5){.}
\rput(1.0,1.5){.}
\rput(1.5,1.5){4}
\rput(2.0,1.5){5}
\rput(2.5,1.5){6}
\rput(3.0,1.5){7}

\rput(0.0,1.0){.}
\rput(0.5,1.0){.}
\rput(1.0,1.0){.}
\rput(1.5,1.0){4}
\rput(2.0,1.0){5}
\rput(2.5,1.0){6}
\rput(3.0,1.0){7}

\rput(0.0,0.5){.}
\rput(0.5,0.5){.}
\rput(1.0,0.5){.}
\rput(1.5,0.5){.}
\rput(2.0,0.5){.}
\rput(2.5,0.5){6}
\rput(3.0,0.5){7}

\rput(0.0,0.0){.}
\rput(0.5,0.0){.}
\rput(1.0,0.0){.}
\rput(1.5,0.0){.}
\rput(2.0,0.0){.}
\rput(2.5,0.0){.}
\rput(3.0,0.0){7}

\pspolygon[linearc=0.1](0.25,2.2)(0.75,2.2)(0.75,1.7)(0.25,1.7)
\pspolygon[linearc=0.1](1.25,1.2)(1.5,1.19)(1.75,0.93)(1.75,0.7)(1.25,0.7)

\pspolygon[linearc=0.2](-0.2,3.2)(3.2,3.2)(3.2,2.3)(-0.2,2.3)
\pspolygon[linearc=0.2](0.55,2.2)(3.2,-0.45)(3.2,2.2)
\end{pspicture}
}
\end{minipage}
\caption{The reduced automaton (at right) is obtained from a decomposition transition table (at right bottom) of the homogeneous automaton (at left). The automaton $\A(L_5)$ is the part of the reduced automaton which represents the triangle (in the transition table decomposition).}
\label{F0}
\end{figure}
In this paper we present the following results: At first, in Section~\ref{CFS}, we extend the concept of Common Follow Sets to homogeneous automata. Then, in Section~\ref{Reduction} we introduce particular decompositions called $Z$-partitions associated with expressions $E_n$. Then, in Section~\ref{TreeReduction} we introduce the notion of $Z$-tree to represent any $Z$-partition by a binary tree. Then, we propose an algorithm of $ O (n \log n) $ time complexity to generate the $Z$-minimal trees. We finish our study by experimental results and a last section in which we show that our algorithms construct automata with a number of transitions equivalents to $n\log^2(n)$ which is the minimal lower bound according to Schnitger.
\section{Notation and terminology}\label{Notation}
We recall the basics of regular expressions, languages and finite state machines and introduce the notation that we use. Let $\Sigma$ be a non-empty finite set of symbols, called alphabet. The set of all the words over $\Sigma$ is denoted by $\Sigma^*$. The empty word is denoted by $\varepsilon$. A language over $\Sigma$ is a subset of $\Sigma^*$. A finite automaton over $\Sigma$ is a 5-tuple $\A = (Q,\Sigma,I,\delta,F)$ where $Q$ is a set of states, $I$ is a subset of $Q$ whose elements are the initial states, $F$ is a subset of $Q$ whose elements are the final states, $\delta$ is a subset of the cartesian product $Q\times \Sigma \times Q$ whose elements are the transitions. A transition $(q,a,p)\in \delta$ goes from the head $q$ to the tail $p$. A path in $\A$ is a sequence of transitions $(q_i,a_i,q_{i+1})$, $i=1 \mbox{ to } n$, of consecutive transitions. Its label is the word $w=a_1a_2\cdots a_n$. A word $w \in \Sigma^*$ is recognized by the automaton $\A$ if there is a path with label $w$ such that $q_1\in I$ and $q_{n+1}\in F$. The language recognized by the automaton $\A$ is the set of words that are recognized by $\A$. The automaton $\A$ is homogeneous if for all $(q,a,p),\ (q',a',p')\in \delta$, $p=p'$ implies that $a=a'$, in this case we write $h(p)=a$. The function $h$ assigns to each non-initial state $q$ of an homogeneous automaton the symbol that is the unique label of all the transitions having $q$ as tail.

In Appendix~\ref{AsymptoticNotations} we recall the basics of asymptotic notations.
\section{CFS for homogeneous automata}\label{CFS}
J.~Hromkovi\~c {\it et al.} \cite{Hromkovic} have given an elegant algorithm based on the notion of Common Follow Sets, to convert a regular expression of size $n$ into a nondeterministic finite automaton having $O(n)$ states and $O(n\log^2n)$ transitions. This notion can be easily extended to homogeneous automata.

Let $\A=(Q,\Sigma,\{q_0\},\delta,F)$ be an homogeneous automaton.
In order to capture the final states in the $\A$, we introduce a dummy state denoted by $\#$ which is not in $Q$. We define over $Q$ the function $follow$ as follows:
\begin {eqnarray*}
 follow(q)&=&
 \left\{
 \begin{array}{l}
  \{p\;\mid\; (q,a,p) \in \delta\} \cup \{\#\} \mbox{ if } q \in F,\\
  \{p\;\mid\; (q,a,p) \in \delta\} \mbox{ otherwise. }
\end{array} \right.
\end {eqnarray*}
Let $q \in Q$ be a state in $\A$, we denote by $dec(q)=\{Q_1,Q_2, \cdots, Q_k\}$ (where $Q_i \subseteq follow(q)$) any decomposition of the set $follow(q)$, i.e. $\displaystyle follow(q)=\bigcup_{Q_i \in dec(q)} Q_i$. In the case where $dec(q)$ is a partition of the set $follow(q)$,
the decomposition $dec(q)$ will be called a partition decomposition. Figure~\ref{F1} provides examples of decompositions.
\begin{figure}
\begin{center}
\unitlength=3pt
\resizebox{.55\textwidth}{!}
{
\begin{picture}(75,10)(0,0)
\gasset{Nadjust=w,Nadjustdist=3}
\node[Nmarks=r](q0)(10,0){$0$}
\imark[ilength=5](q0)
\node[Nmarks=r](q1)(30,0){$1$}
\node[Nmarks=r](q2)(50,0){$2$}
\node[Nmarks=r](q3)(70,0){$3$}
\drawedge(q0,q1){$1$}
\drawedge[curvedepth=7](q0,q2){$2$}
\drawedge[curvedepth=14](q0,q3){$3$}
\drawedge(q1,q2){$2$}
\drawedge[curvedepth=-9](q1,q3){$3$}
\drawedge(q2,q3){$3$}
\end{picture}
}
\end{center}
\caption[soubhanALLAH]
{We have $follow (0)=\{1,2,3,\#\}$. Here are three possible decompositions of $follow(0)$. The two first ones are partition decompositions.
\begin{inparaenum}[(i)]
\item $dec(0)= \{\{1,2\},\{3,\#\}\}$
\item $dec(0)= \{\{1\},\{2\},\{3,\#\}\}$
\item $dec(0)= \{\{1,2\},\{2,3,\#\}\}$
\end{inparaenum}.}
\label{F1}
\end{figure}
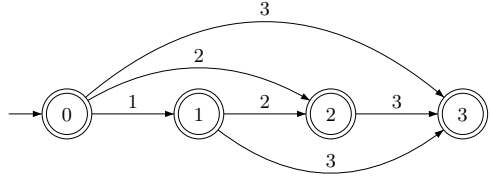
\begin{definition}[Common Follow Sets System]
Let $\A$ be a homogeneous automaton. A CFS system for $\A$ is given as $S(\A)=(dec(q))_{q\in Q}$, where each
 $dec(q)\subseteq 2^Q$ is a decomposition of $follow(q)$.
\end{definition}
\begin{definition}[CFS automaton]
Let $\A=(Q,\Sigma,\{q_0\},\delta,F)$ be a homogeneous automaton and $S(\A)$ an associated Common Follow Sets system. The Common Follow Sets automaton associated with $S(\A)$ is defined by ${\cal C}_{S(\A)}=(Q',\Sigma,I',\delta',F')$ where
\begin{itemize}
\item $Q'=\bigcup\limits_{q\in Q} dec(q)$ 
\item $I'= dec(q_0)$
\item $\mbox{ For } Q_1\in Q',\;Q_1\in F'$ if and only if $\#\in Q_1$
\item $\delta' = \{(Q_1,a,Q_2)\; \mid\; \exists q\in Q_1 \mbox{ s.t. } h(q)=a \mbox { and } Q_2\in dec(q) \}$.
\end{itemize}
\end{definition}
\begin{theorem}
Let $\A$ be a homogeneous automaton, $S(\A)$ be a Common Follow Sets System associated with $\A$ and ${\cal C}_{S(\A)}$ its Common Follow Sets automaton. Then ${\cal C}_{S(\A)}$ and $\A$ recognize the same language.
\end{theorem}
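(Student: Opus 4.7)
The plan is to prove the two language inclusions separately, matching accepting runs in $\A$ with accepting runs in $\mathcal{C}_{S(\A)}$ in a state-by-state fashion. In both directions the key observation is that a block $Q_i\in dec(q)$ is, by construction, a piece of $follow(q)$, so membership of a state of $\A$ in such a block encodes precisely the possibility of firing the corresponding transition in $\A$.

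First I would show $L(\A)\subseteq L(\mathcal{C}_{S(\A)})$. Let $w=a_1\cdots a_n$ be accepted by $\A$ along a path $q_0,q_1,\ldots,q_n$ with $q_n\in F$. Since $\A$ is homogeneous, $h(q_i)=a_i$ for $i\geq 1$. I choose a corresponding sequence of blocks inductively: because $q_1\in follow(q_0)=\bigcup_{Q_i\in dec(q_0)} Q_i$, there is some $Q^{(0)}\in dec(q_0)$ containing $q_1$, and $Q^{(0)}\in I'$. Assuming $Q^{(i-1)}\in dec(q_{i-1})$ contains $q_i$, I pick any $Q^{(i)}\in dec(q_i)$ containing $q_{i+1}$ (for $i<n$) or containing $\#$ (for $i=n$, which exists because $q_n\in F$ forces $\#\in follow(q_n)$). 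The transition $(Q^{(i-1)},a_i,Q^{(i)})\in \delta'$ is witnessed by $q_i\in Q^{(i-1)}$ with $h(q_i)=a_i$ and $Q^{(i)}\in dec(q_i)$. Thus $Q^{(0)}\to\cdots\to Q^{(n)}$ is an accepting run on $w$. The empty word is handled by observing that $\varepsilon\in L(\A)$ iff $q_0\in F$ iff $\#\in follow(q_0)$ iff some block of $dec(q_0)$ lies in $I'\cap F'$.

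Conversely, to show $L(\mathcal{C}_{S(\A)})\subseteq L(\A)$, let $w=a_1\cdots a_n$ label an accepting run $Q^{(0)}\to Q^{(1)}\to\cdots\to Q^{(n)}$ in $\mathcal{C}_{S(\A)}$ with $Q^{(0)}\in dec(q_0)$ and $\#\in Q^{(n)}$. By definition of $\delta'$, each transition $(Q^{(i-1)},a_i,Q^{(i)})$ is witnessed by a state $p_i\in Q^{(i-1)}$ with $h(p_i)=a_i$ and $Q^{(i)}\in dec(p_i)$. Setting $p_0:=q_0$, a straightforward induction shows $p_i\in follow(p_{i-1})$: indeed $p_1\in Q^{(0)}\subseteq follow(q_0)$, and for $i\geq 2$ we have $p_i\in Q^{(i-1)}\subseteq follow(p_{i-1})$. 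Since $h(p_i)=a_i$ is defined, $p_i\neq \#$, so homogeneity of $\A$ yields $(p_{i-1},a_i,p_i)\in \delta$. Finally $\#\in Q^{(n)}\subseteq follow(p_n)$ forces $p_n\in F$, so $p_0,p_1,\ldots,p_n$ is an accepting path in $\A$ with label $w$; the empty-word case is the same as above.

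There is no real obstacle here: the proof is essentially a translation between runs, and the only subtlety worth flagging is the role of the dummy state $\#$, which must be carefully excluded from the $p_i$'s (since $h$ is undefined on it) and simultaneously used to encode finality on the $\mathcal{C}_{S(\A)}$ side. Once one notices that the defining property of a decomposition, $follow(q)=\bigcup_{Q_i\in dec(q)} Q_i$, is the only fact about $S(\A)$ used in both directions, the two inclusions become symmetric bookkeeping.
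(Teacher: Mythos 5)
Your proof is correct: the paper itself gives no argument for this theorem, deferring to Theorem~5 of Hromkovi\v{c} et al., and your two-inclusion translation of runs (picking a covering block of $follow(q_i)$ in one direction, extracting the witnessing states $p_i$ with $h(p_i)=a_i$ in the other, with $\#$ handling finality) is exactly the standard argument that reference uses, adapted to homogeneous automata. The points you flag --- that homogeneity forces the label of the recovered transition $(p_{i-1},a_i,p_i)$, and that $\#$ must be excluded from the witnesses --- are precisely the details that need checking, and you handle them correctly.
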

This theorem can be proved in the same way as Theorem~5 of the paper of J.~Hromkovi\~c {\it et al.} \cite{Hromkovic}.

To evaluate the number of transitions in the automaton ${\cal C}_{S(\A)}$ we define over the states of $\A$ two functions,
\begin{itemize}
\item $a(q)=|dec(q)|$ the size of the decomposition of the set $follow(q)$
\item $b(q)=|\{ Q_1 \in Q' \; \mid\; q\in Q_1\}|$ the number of states in $Q'$ that contain the state $q$.
\end{itemize}
\begin{lemma}
\label{n_trans}
The number of transitions $T_{\cal C}$ in ${\cal C}_{S(\A)}$ is such that $\displaystyle T_{\cal C} \leq \sum_{q\in Q} a(q)b(q)$.
\end{lemma}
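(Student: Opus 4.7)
The plan is to count the transitions of $\mathcal{C}_{S(\A)}$ by charging each one to a suitable state of the original automaton $\A$, then bound the total charge.

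First, I would rewrite the definition of $\delta'$ in terms of what I will call \emph{witness triples}: a triple $(q, Q_1, Q_2)$ with $q \in Q_1$, $Q_1 \in Q'$, and $Q_2 \in dec(q)$. By the definition of $\delta'$, the map
\[
  \Phi: (q, Q_1, Q_2) \longmapsto (Q_1, h(q), Q_2)
\]
sends every witness triple to an element of $\delta'$, and conversely every transition $(Q_1, a, Q_2) \in \delta'$ has at least one preimage under $\Phi$ (by definition, it requires the existence of some $q \in Q_1$ with $h(q)=a$ and $Q_2 \in dec(q)$). Hence $\Phi$ is surjective onto $\delta'$, and therefore
\[
  T_\mathcal{C} \;=\; |\delta'| \;\leq\; \bigl|\{(q, Q_1, Q_2) \;:\; q \in Q_1,\; Q_2 \in dec(q)\}\bigr|.
\]

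Next, I would count the witness triples by fixing $q$ first. For each $q \in Q$, the number of $Q_1 \in Q'$ containing $q$ is exactly $b(q)$ by definition, and independently the number of choices for $Q_2$ is $|dec(q)| = a(q)$. Multiplying and summing gives
\[
  \bigl|\{(q, Q_1, Q_2) \;:\; q \in Q_1,\; Q_2 \in dec(q)\}\bigr| \;=\; \sum_{q \in Q} a(q)\, b(q),
\]
which combined with the previous inequality yields the claim.

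The only delicate point is justifying why we get an inequality rather than an equality. The reason is that $\Phi$ need not be injective: two distinct states $q, q' \in Q_1$ with $h(q) = h(q')$ and with a common $Q_2 \in dec(q) \cap dec(q')$ produce the same transition $(Q_1, h(q), Q_2)$ in $\delta'$. Since $\A$ is homogeneous, $h$ is well-defined, but nothing prevents two different states from sharing the same symbol under $h$, so such collisions really can occur; this is exactly what makes the bound a $\leq$. No further work is needed, as overcounting only helps in an upper bound, and the whole argument is a routine double-counting once the witness-triple viewpoint is set up.
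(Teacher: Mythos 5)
Your proof is correct and is exactly the counting argument the paper leaves implicit (the lemma is stated there without proof): every transition of ${\cal C}_{S(\A)}$ is witnessed by at least one triple $(q,Q_1,Q_2)$ with $q\in Q_1$ and $Q_2\in dec(q)$, and grouping these triples by $q$ yields $\sum_{q\in Q}a(q)b(q)$. Your closing remark on when $\Phi$ fails to be injective also matches the paper's own follow-up observation that equality holds whenever $h$ is injective on $Q\setminus\{q_0\}$.
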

It is easy to see that if for all $p,q\in Q \backslash \{q_0\}$ such that $p\neq q$, we have $h(p) \neq h(q)$ then the equality holds. From Lemma~\ref{n_trans}, we can see that the number of transitions in a CFS automaton depends on the decomposition system. A decomposition which is not a partition will induce more transitions than a partition decomposition. Therefore in the following we are interested only in partition decompositions. As it was mentioned in the introduction our study will focus on the CFS automata associated with the family of automata $(\A_n)_{n\ge 1}$. The automaton $\A_n=(Q, \Sigma ,I,\delta ,F)$ is defined by:
\begin{itemize}
\item $\Sigma=\{1,2,\dots,n\}$
\item $Q=\Sigma \cup \{0\}$
\item $F=Q$
\item $I=\{0\}$, $\delta=\{ (p,q,q)\in Q \times \Sigma \times Q \; \mid \; q>p\}$.
\end{itemize} Figure~\ref{F2} shows two CFS automata associated with the automaton $\A_3$.

In the next sections we present two algorithms that construct particular CFS systems which correspond to CFS automata with a reduced number of transitions. In the last section  we give comparative and experimental results.
\begin{figure}
\begin{minipage}{.46\textwidth}
\begin{center}
\resizebox{.9\textwidth}{!}{
\begin{picture}(80,20)(0,0)
\gasset{Nadjust=w,Nadjustdist=3}
\node[Nmarks=r](q0)(10,0){$\{1,2,3,\#\}$}
\imark[ilength=5](q0)
\node[Nmarks=r](q2)(25,20){$\{2,3,\#\}$}
\node[Nmarks=r](q3)(50,0){$\{3,\#\}$}
\node[Nmarks=r](q4)(75,0){$\{\#\}$}
\drawedge(q0,q3){$2$}
\drawedge(q0,q2){$1$}
\drawedge(q2,q3){$2$}
\drawedge(q3,q4){$3$}
\drawedge(q0,q3){$2$}
\drawedge[curvedepth=-8](q0,q4){$3$}
\drawedge[curvedepth=4](q2,q4){$3$}
\end{picture}
}

\[
\begin{array}{|c|c|c|c|c|}\hline
 q & a(q) & b(q) & a(q)b(q) & dec(q)\\ \hline
 0 & 1 & 0 & 0 & \{ \{1,2,3,\#\} \}\\ \hline
 1 & 1 & 1 & 1 &\{ \{2,3,\#\} \}\\ \hline
 2 & 1 & 2 & 2&\{ \{3,\#\} \}\\ \hline
 3 & 1 & 3 & 3&\{ \{\#\} \}\\ \hline
\end{array}
\]

$T_{\cal C} = 6$
\end{center}
\end{minipage}
\hfill
\begin{minipage}{.46\textwidth}
\begin{center}
\resizebox{.9\textwidth}{!}{
\begin{picture}(80,20)(0,0)
\gasset{Nadjust=w,Nadjustdist=3}
\node[Nmarks=r](q0)(10,0){$\{3,\#\}$}
\imark[ilength=5](q0)
\node(q1)(10,20){$\{1,2\}$}
\imark[ilength=5](q1)
\node(q2)(25,10){$\{2\}$}
\node(q3)(50,10){$\{3\}$}
\node[Nmarks=r](q4)(75,10){$\{\#\}$}

\drawedge(q1,q0){$1$}
\drawedge[curvedepth=-4](q1,q2){$1$}
\drawedge[curvedepth=4](q1,q3){$2$}
\drawedge[curvedepth=8](q1,q4){$2$}

\drawedge[curvedepth=-8](q2,q4){$2$}
\drawedge(q2,q3){$2$}
\drawedge(q3,q4){$3$}
\drawedge[curvedepth=-12](q0,q4){$3$}
\end{picture}
}

\[
\begin{array}{|c|c|c|c|c|}\hline
 q & a(q) & b(q) & a(q)b(q) & dec(q)\\ \hline
 0 & 2 & 0 & 0&\{ \{1,2\} ,\{3,\#\} \}\\ \hline
 1 & 2 & 1 & 2&\{ \{2\},\{3,\#\} \}\\ \hline
 2 & 2 & 2 & 4&\{ \{3\},\{\#\} \}\\ \hline
 3 & 1 & 2 & 2&\{ \{\#\} \}\\ \hline
\end{array}
\]

$T_{\cal C} = 8$
\end{center}
\end{minipage}
%

\caption{Two CFS Automata constructed from the automaton $\A_3$ shown in Figure~\ref{F1}.}
\label{F2}
\end{figure}
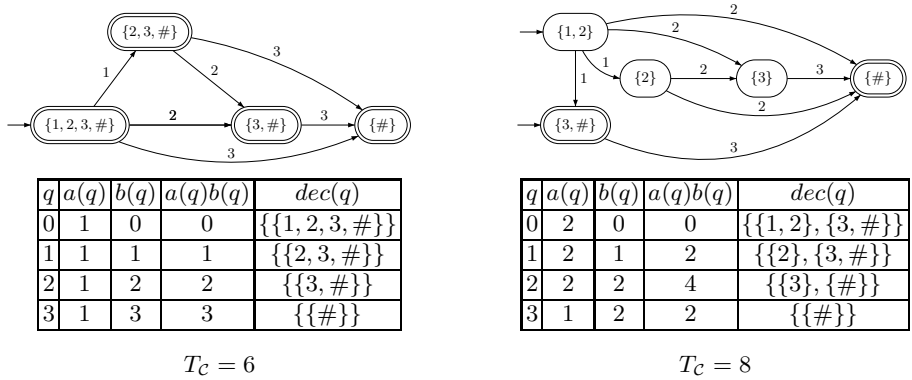
\section{$\A_n$ Reduction Algorithm}\label{Reduction}
The notion of \emph{$Z$-partition} is nowhere introduced formally. In the following we are interested in reducing the number of transitions in the automaton $\A_n$. The following algorithm
computes particular CFS systems $S(\A_n)$ that provide ${\cal C}_{S(\A_n)}$ automata with small number of transitions and having $n+1$ states. Let $E_n=(1+\varepsilon)\cdot (2+\varepsilon)\cdot (3+\varepsilon)\cdots (n+\varepsilon)$ be a regular expression, it is easy to see that the language denoted by the expression $E_n$ is exactly the language recognized by the automaton $\A_n$. We have:
\begin{proposition}
Each transition minimal automaton that recognizes $L(E_n)$ has exactly $n+1$ states.
\end{proposition}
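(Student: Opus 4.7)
The statement splits into two inequalities on the number of states: $|Q|\ge n+1$ for every NFA recognizing $L(E_n)$, and $|Q|\le n+1$ for a transition-minimal one.

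For the lower bound I would use a fooling-set argument. Recall from the introduction that $L(E_n)$ is exactly the set of subsequences of the word $1\cdot 2\cdots n$. For each $i\in\{0,1,\dots,n\}$ set
\[
  x_i \;=\; 1\cdot 2\cdots i, \qquad y_i \;=\; (i{+}1)(i{+}2)\cdots n,
\]
with the conventions $x_0=\varepsilon$ and $y_n=\varepsilon$. Then $x_i y_i = 1\cdot 2\cdots n\in L(E_n)$ for every $i$. For $i<j$, the cross concatenation
\[
  x_j y_i \;=\; 1\cdot 2\cdots j\cdot (i{+}1)(i{+}2)\cdots n
\]
contains the adjacent pair $j,(i{+}1)$ with $i{+}1\le j$, which cannot occur in a subsequence of $1\cdot 2\cdots n$; hence $x_j y_i\notin L(E_n)$. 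The $n+1$ pairs $\{(x_i,y_i)\}_{0\le i\le n}$ therefore form a fooling set, and the classical fooling-set lemma forces every NFA recognizing $L(E_n)$ to have at least $n+1$ states.

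For the upper bound, observe first that $\A_n$ itself has exactly $n+1$ states and recognizes $L(E_n)$, so an NFA with $n+1$ states exists. To preclude transition-minimal NFAs with $k>n+1$ states, I would argue by a state-merging reduction: in any such $\A$ one locates two states sharing the same right language and collapses them into one. This identification preserves the recognized language and only fuses parallel transitions (edges with the same label and the same other endpoint), so the transition count does not increase. Iterating drops the state count to $n+1$ without raising the transition count, contradicting transition-minimality unless $\A$ already has $n+1$ states.

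The main obstacle is the upper-bound step, specifically guaranteeing the existence of a legal merge pair. Two states with the same right language are safe to merge, but one must argue that such a pair must exist as soon as $k>n+1$; this requires bounding the number of distinct right languages that can appear in a trim NFA for $L(E_n)$, using the fact that each reachable state's right language lies inside one of the $n+1$ residual classes of $L(E_n)$ together with a careful pigeonhole bookkeeping. The fooling-set lower bound, by contrast, is clean and self-contained, and I would present it first as the core content of the proof.
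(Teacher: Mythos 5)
The paper does not actually supply a proof of this proposition: it only remarks that it ``can be proved using properties of the universal automaton of $L(E_n)$.'' Your fooling-set argument for the lower bound is therefore a genuinely different and more elementary route, and it is correct and complete: the pairs $(x_i,y_i)=(1\cdots i,\,(i{+}1)\cdots n)$ satisfy $x_iy_i\in L(E_n)$, and for $i<j$ the word $x_jy_i$ fails to be strictly increasing at the junction $j,(i{+}1)$, so the extended fooling-set lemma gives the bound $|Q|\ge n+1$ for \emph{every} NFA recognizing $L(E_n)$. This part can stand on its own and is arguably preferable to an appeal to the universal automaton.

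The upper bound, however, has two genuine gaps, the first of which you partly flag yourself. (a) The pigeonhole does not go through as described: in an NFA the right language of a state is only \emph{contained in} the intersection of the residuals $u^{-1}L$ over the words $u$ reaching it; it need not equal a residual, and the number of distinct right languages occurring among the states of a trim NFA for a fixed language is not bounded by the number of residuals. So an NFA with more than $n+1$ states need not contain two states with identical right languages, and the proposed merge pair may simply not exist. (b) Even when a merge exists, a merge that merely \emph{does not increase} the transition count yields no contradiction with transition-minimality, since minimality here is measured in transitions, not states; at best you conclude that \emph{some} transition-minimal automaton has $n+1$ states, whereas the proposition asserts this for \emph{each} one. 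The intended repair is precisely the universal automaton: the residuals of $L(E_n)$ form a chain $L_0\supsetneq L_1\supsetneq\cdots\supsetneq L_n$, so the universal automaton has exactly $n+1$ states, and the canonical morphism from any trim NFA into it maps transitions to transitions, producing a sub-automaton of the universal automaton that still recognizes $L(E_n)$ with at most as many transitions and at most $n+1$ states; one must then argue that a non-injective collapse forces a strict loss of transitions. That last step is the real content of the upper bound and is missing from your sketch.
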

This proposition can be proved using properties of the universal automaton~\cite{UniversalAutomaton} of $L(E_n)$.

For fixed $n$, a CFS system produced by the following algorithm will be denoted by $Z(\A_n)$. The set of all $Z(A_n)$ will be denoted $CFSZ(n)$. Our aim in this section is to compute all minimal decompositions in $CFSZ(n)$.
\begin{algorithm}[H]
\caption{CFSPartitions$(n)$}
\label{alg:alg1}
\begin{algorithmic}[1]
\REQUIRE $n \in \mathbb{N}$
\ENSURE $Z(\A_n)$
\STATE $Q\leftarrow\{0,1,2,3,4, \dots ,n\}$
\FOR{$i$ = $0$ to $n$}
 \STATE $follow(i) \leftarrow\{j \in Q|j>i\} \cup \{\#\}$
 \STATE $dec(i) \leftarrow \phi$
 \STATE $Q_i \leftarrow \phi$
\ENDFOR
\FOR{$i$ = $0$ to $n$}
 \STATE Choose $j$ in $Q$
 \STATE $Q \leftarrow Q \backslash \{j\}$
 \STATE $Q_j \leftarrow follow(j)$
 \FORALL{$k \in Q$}
 \IF{($Q_j\subseteq follow(k)$)}
  \STATE $dec(k) \leftarrow dec(k)\cup\{Q_j\}$
  \STATE $follow(k) \leftarrow follow(k) \backslash Q_j$
 \ENDIF
 \ENDFOR
\ENDFOR
\end{algorithmic}
\end{algorithm}
\begin{proposition}
The number of all $Z(\A_n)$ CFS partition systems is the $n^{th}$ Catalan number: $|CFSZ(n)|=\frac{1}{n+1}{2n\choose n}$.
\end{proposition}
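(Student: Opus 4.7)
The plan is to establish a Catalan recurrence for $a_n := |CFSZ(n)|$ by decomposing any $Z$-partition according to the very first pivot $j$ chosen by Algorithm~\ref{alg:alg1}, then invoking Segner's convolution for the Catalan numbers.

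First I analyse the initial iteration. Initially $Q=\{0,1,\dots,n\}$ and $follow(k)=\{k+1,\dots,n,\#\}$ for every $k$; picking some $j$ gives $Q_j=\{j+1,\dots,n,\#\}$, the inner test $Q_j\subseteq follow(k)$ succeeds exactly for $k<j$, and for each such $k$ the block $\{j+1,\dots,n,\#\}$ is appended to $dec(k)$ while $follow(k)$ is truncated to $\{k+1,\dots,j\}$. States $k>j$ are untouched. After this step the computation decouples into two independent sub-instances: on the ``left'' block $\{0,1,\dots,j-1\}$ every current follow set sits inside $\{1,\dots,j\}$, and identifying $j$ with the dummy symbol $\#$ gives the initial configuration of Algorithm~\ref{alg:alg1} on $\A_{j-1}$; on the ``right'' block $\{j+1,\dots,n\}$ the follow sets are unchanged and form the initial configuration on $\A_{n-j-1}$. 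No block later produced on the left can ever be contained in a right-side follow set (the former lies in $\{1,\dots,j\}$, the latter contains only integers $>j$ together with $\#$), and conversely, so the two completions can be carried out freely and independently.

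Next I check that the assignment ``first pivot $+$ left sub-partition $+$ right sub-partition'' is bijective. The invertibility hinges on the observation that the block of $dec(0)$ containing $\#$ is always the one deposited during iteration $1$, hence of the form $\{j+1,\dots,n,\#\}$: after step $1$ the symbol $\#$ has been stripped from $follow(0)$, so no later step can insert a $\#$-containing set into $dec(0)$. Reading off that block recovers $j$ unambiguously, and then $(dec(k))_{k<j}$ (after removing the common top block) and $(dec(k))_{k>j}$ reconstruct the left and right sub-$Z$-partitions. This yields a bijection
\begin{equation*}
CFSZ(n) \;\longleftrightarrow\; \bigsqcup_{j=0}^{n} CFSZ(j-1)\times CFSZ(n-j-1),
\end{equation*}
with the convention that $CFSZ(-1)$ is a singleton corresponding to the empty instance, so that
\begin{equation*}
a_n \;=\; \sum_{j=0}^{n} a_{j-1}\,a_{n-j-1},\qquad a_{-1}:=1.
\end{equation*}
This is precisely Segner's convolution for the Catalan numbers, and a one-line induction from the base case then delivers $a_n=\frac{1}{n+1}\binom{2n}{n}$.

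The main obstacle I foresee is the invertibility step: distinct first pivots, and distinct left/right completions, must never collapse to the same global $Z$-partition. The observation about the $\#$-block in $dec(0)$ is the linchpin, and once it is nailed down, the remaining verifications---that every pair in $CFSZ(j-1)\times CFSZ(n-j-1)$ is actually realised, and that the two halves are genuinely independent---reduce to routine bookkeeping with the inner-loop test $Q_j\subseteq follow(k)$.
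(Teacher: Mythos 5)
Your decomposition by the first pivot is structurally sound, and it is essentially the mechanism the paper itself relies on: the first pivot is the root of the recursion tree of Algorithm~\ref{alg:alg2}, and your left and right sub-instances are exactly its two recursive calls (this is the tree correspondence of Proposition~\ref{propo4}; the paper's own justification of the present proposition is only the one-line remark about permutations, so you actually supply more detail than the source). The independence of the two halves and the recovery of $j$ from the unique $\#$-block of $dec(0)$ are both correct.

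The genuine gap is the final step. The recurrence you derive,
\[
a_n=\sum_{j=0}^{n}a_{j-1}\,a_{n-j-1},\qquad a_{-1}=1,
\]
is Segner's convolution \emph{shifted by one}: putting $b_m=a_{m-1}$ gives $b_{n+1}=\sum_{i=0}^{n}b_i b_{n-i}$ with $b_0=1$, hence $b_m=C_m$ and $a_n=C_{n+1}=\frac{1}{n+2}\binom{2n+2}{n+1}$, not $\frac{1}{n+1}\binom{2n}{n}$. The ``one-line induction'' you invoke fails already at $n=1$: your recurrence gives $a_1=a_{-1}a_0+a_0a_{-1}=2$, and indeed $\A_1$ admits exactly the two systems $dec(0)=\{\{1,\#\}\}$ and $dec(0)=\{\{1\},\{\#\}\}$ (with $dec(1)=\{\{\#\}\}$ in both), whereas the displayed formula evaluates to $1$. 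Since your own injectivity argument correctly rules out any collapse of distinct triples $(j,\text{left},\text{right})$ onto the same system, there is no way to lose the extra factor: what your argument proves is $|CFSZ(n)|=C_{n+1}$, i.e.\ the count of binary search trees on the $n+1$ states $\{0,\dots,n\}$. As a proof of the statement as printed it therefore does not close; you need either to reconcile the index (the statement appears to be off by one relative to the object $CFSZ(n)$ as defined by Algorithm~\ref{alg:alg1}) or to exhibit the identification that would reduce $C_{n+1}$ to $C_n$ --- and no such identification exists.
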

The successive choice of values of $j$ (line 8) leads to a permutation of size $n$. So, each CFS partition system $Z(\A_n)$ can be associated with at least one permutation of size $n$.

The following algorithm is a recursive version of Algorithm~\ref{alg:alg1}. Its first call is done by RecursiveDecomposition$(0,n)$. Without loss of generality we associate in this last algorithm the dummy state $\#$ to the number $n+1$. At each call, Algorithm~\ref{alg:alg2} constructs one block from the transition matrix $M$, for the call RecursiveDecomposition$(n_1,n_2)$ and the choice of $j$ (line 2), it produces the block $B_j$ which is the submatrix $M[j..n_1;j..n_2]$.
\begin{algorithm}[H]
\caption{RecursiveDecomposition $(n_1,n_2)$}
\label{alg:alg2}
\begin{algorithmic}[1]
\REQUIRE $ n_1, n_2 \in \mathbb{N}$
\ENSURE $Z(\A_n)$ when $n_1=0$ and $n_2=n$
\IF{$ n_1 \leq n_2 $}
 \STATE Choose an integer $ j $ between $ n_1 $ and $ n_2 $, $ j \in \{n_1, \dots ,n_2\} $
 \STATE $Q_j\leftarrow\{j+1, \dots ,n_2+1\}$
 \FOR{$k$ = $n_1$ to $j$}
 \STATE $dec(k) \leftarrow dec(k)\cup\{Q_j\}$
 \ENDFOR
 \STATE $B_j=M[j..n_1;j..n_2]$
 \STATE RecursiveDecomposition$(n_1,j-1)$
 \STATE RecursiveDecomposition$(j+1,n_2)$
\ENDIF
\end{algorithmic}
\end{algorithm}
\begin{example}
In this example we shows the CFS partition systems associated with permutations $(0,2,1,3)$ and permutation $(1,0,2,3)$.

\begin{center}
\begin{minipage}{.45\textwidth}
\resizebox{1.00\textwidth}{!}{%
\begin{tabular}{ccccccccccccc}
$dec(0)=\{$&$\{$&1&,&2& &,& &3&,&$\#$&$\}$&$\}$\\
$dec(1)=\{$& & &$\{$&2&$\}$&,&$\{$&3&,&$\#$&$\}$&$\}$\\
$dec(2)=\{$& & & & & & &$\{$&3&,&$\#$&$\}$&$\}$\\
$dec(3)=\{$& & & & & & & & &$\{$&$\#$&$\}$&$\}$
\end{tabular}
\begin{tabular}{ccc|cc|}
\cline{2-5}
\multicolumn{1}{c|}{$0:$} & $1$ & \multicolumn{1}{c}{$2$} & $3$ & $\#$\tabularnewline
\cline{2-5}
$1:$ & \multicolumn{1}{c|}{} & $2$ & $3$ & $\#$\tabularnewline
\cline{3-3}
$2:$ & & & $3$ & $\#$\tabularnewline
\cline{4-5}
$3:$ & & \multicolumn{1}{c}{} & \multicolumn{1}{c|}{} & $\#$\tabularnewline
\cline{5-5}
\end{tabular}}
\end{minipage}
\hfill
\begin{minipage}{.45\textwidth}
\resizebox{1.00\textwidth}{!}{%
\begin{tabular}{ccccccccccccc}
$dec(0)=\{$&$\{$&1&$\}$&,&$\{$&2&,&3&,&\#&$\}$&$\}$\\
$dec(1)=\{$& & & & &$\{$&2&,&3&,&\#&$\}$&$\}$\\
$dec(2)=\{$& & & & & & &$\{$&3&,&\#&$\}$&$\}$\\
$dec(3)=\{$& & & & & & & & &$\{$&\#&$\}$&$\}$
\end{tabular}
\begin{tabular}{cc|ccc|}
\cline{2-5}
\multicolumn{1}{c|}{$0:$} & $1$ & $2$ & $3$ & $\#$\tabularnewline
\cline{2-2}
$1:$ & & $2$ & $3$ & $\#$\tabularnewline
\cline{3-5}
$2:$ & \multicolumn{1}{c}{} & \multicolumn{1}{c|}{} & $3$ & $\#$\tabularnewline
\cline{4-5}
$3:$ & \multicolumn{1}{c}{} & & \multicolumn{1}{c|}{} & $\#$\tabularnewline
\cline{5-5}
\end{tabular}}
\end{minipage}
\end{center}

For permutation $(0,2,1,3)$ the first block $B_0$ is
\begin{tabular}{|c|c|c|c|}
\hline
$1$ & $2$ & $3$ & $\#$\\
\hline
\end{tabular}
, the second block $B_2$ is
\begin{tabular}{|cc|}
\hline
$3$ & $\#$\\
$3$ & $\#$\\
\hline
\end{tabular},
the third block $B_1$ is
\begin{tabular}{|c|}
\hline
$2$\\
\hline
\end{tabular} and $B_3=$
\begin{tabular}{|c|}
\hline
$\#$\\
\hline
\end{tabular} is the last one.
\end{example}
\begin{proposition}
The computation of all minimal partition system $Z(\A_n)$ in $CFSZ(n)$ can be done in time $O(n!)$.
\end{proposition}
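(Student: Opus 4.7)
The plan is to enumerate every element of $CFSZ(n)$, compute its transition count, and retain the minima. The remark made just before the proposition, that every $Z(\A_n)$ is produced by Algorithm~\ref{alg:alg1} for at least one permutation of the state set (namely the sequence of choices of $j$ at line~8), is precisely what I need: enumerating all permutations of $\{0,\dots,n\}$ and running Algorithm~\ref{alg:alg1} on each will hit every element of $CFSZ(n)$, possibly with duplicates.

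Concretely, I would iterate over all permutations $\sigma$ of the state set, whose number is in $O(n!)$. For each $\sigma$, I would execute Algorithm~\ref{alg:alg1} (equivalently, Algorithm~\ref{alg:alg2}) along the corresponding choice sequence to obtain $Z_\sigma(\A_n)$, and evaluate its number of transitions by Lemma~\ref{n_trans} as $\sum_{q \in Q} a(q)b(q)$. A running list of minimizing systems would be maintained throughout and updated whenever a strictly smaller count is encountered, so that at termination it contains exactly the minimal partition systems of $CFSZ(n)$. Correctness is immediate from the surjection above together with Lemma~\ref{n_trans}.

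The main obstacle is keeping the per-permutation cost low enough to be absorbed into the $O(n!)$ bound, since a fresh execution of Algorithm~\ref{alg:alg1} for each $\sigma$ would add a polynomial factor. To avoid this, I would generate the permutations in Heap's (Gray-code) ordering, in which two consecutive permutations differ by the exchange of two elements only. The corresponding decomposition then requires only a local update: only the blocks of the transition matrix indexed by the swapped pair of positions are affected, together with the entries of $follow(\cdot)$, $dec(\cdot)$, the tables $a(\cdot)$, $b(\cdot)$, and the running value of $T_{\cal C}$ that reference them. With such incremental maintenance the per-step work amortizes to a constant, which combined with the enumeration of $O(n!)$ permutations yields the announced complexity.
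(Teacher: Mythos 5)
Your core argument is exactly the paper's: the paper's entire proof is ``this can be done by calling the nondeterministic Algorithm~\ref{alg:alg1} or \ref{alg:alg2} for each possible execution,'' i.e.\ brute-force enumeration over the $n!$ choice sequences, which is what your first two paragraphs describe. Up to that point you match the intended proof, and you even make explicit the surjection from permutations onto $CFSZ(n)$ and the use of Lemma~\ref{n_trans} to score each candidate, which the paper leaves implicit.

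The place where you go beyond the paper is also the place where your argument breaks. Your amortization claim --- that under Heap's ordering a single transposition of the permutation perturbs only the blocks ``indexed by the swapped pair of positions,'' so the update costs amortized $O(1)$ --- is not justified and is false in general. The block decomposition produced by Algorithm~\ref{alg:alg2} is determined by the permutation in the same way a binary search tree is determined by its insertion order: the element chosen first becomes the root block $B_j$ covering the whole range $[n_1,n_2]$, and every later element's block depends on where it falls relative to \emph{all} earlier choices. Heap's algorithm repeatedly swaps the element in the first position with others; such a swap changes the root of the recursion and can restructure $\Theta(n)$ blocks, hence $\Theta(n)$ of the values $a(q)$, $b(q)$, and the running sum $\sum a(q)b(q)$. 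So the per-step work is not $O(1)$ amortized, and your refinement does not deliver the absorption of the polynomial factor you were aiming for. The good news is that you do not need it: the paper itself charges a full (polynomial-time) execution of the algorithm to each of the $n!$ permutations and still writes the total as $O(n!)$, so the bound is being read with a tolerance for polynomial factors; your first two paragraphs alone already constitute the paper's proof. If you want to keep the claim $O(n!)$ literally, the honest statement is $O(n!\cdot n^c)$ for the naive enumeration, or $O(C_n\cdot n^c)$ with $C_n$ the Catalan number if you enumerate the distinct decompositions (e.g.\ via Algorithm~\ref{alg:alg2}'s recursion trees) rather than the permutations, which the paper's Remark~\ref{remark2} then improves to polynomial time by dynamic programming.
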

\begin{proof}
This can be done by calling the nondeterministic Algorithm~\ref{alg:alg1} or \ref{alg:alg2} for each possible execution.
\end{proof}
\begin{remark}
By the use of the dynamic programming, we can improve the exponential brute force method to a polynomial algorithm as shown in Algorithm~\ref{alg:alg3}.
\label{remark2}
\end{remark}
In the following sections we will introduce our second algorithm which is based on trees. It computes efficiently the reduced $Z(\A_n)$ systems.
\section{Tree based reduction}\label{TreeReduction}
A \emph{binary tree} is a structure defined on a finite set of nodes that either contains no nodes, or is made of three disjoint sets of nodes:
\begin{itemize}
\item a root node
\item a binary tree called its left subtree
\item a binary tree called its right subtree.
\end{itemize}
The binary tree that contains no nodes is called the empty tree. If the left subtree is non-empty, its root is called the left child of the root of the entire tree. Likewise, the root of a non-empty right subtree is the right child of the root of the entire tree. Therefore, in a \emph{full binary tree} each node is either a leaf or has degree exactly $2$, there is no degree-1 nodes. In the following we call a $n$-tree a full binary tree with $n$ leaves. There is a unique $n$-tree for $n=0$ to $2$.

Let $t$ be a $n$-tree and let $\pi$ be a path in $t$. The \textit{left weight} (resp. \textit{right weight}) $a_\pi$ (resp. $b_\pi$) is defined as the number of left (resp. right) edges in the path $\pi$. The \textit{length} of $\pi$ denoted by $l_\pi=a_\pi + b_\pi$ is the length of the path $\pi$. Denote by $w_\pi=a_\pi b_\pi$ the \textit{weight} of $\pi$. The \textit{cost} $c_\pi$ of $\pi$ is the sum of its weight and its length. So we have $c_\pi=w_\pi + l_\pi$.

Let $\nu$ be a node in $t$. Denote by $\pi_\nu$ the path from the node $\nu$ to the root of $t$. Denote by $\nu_l$ (resp. $\nu_r$) the left child of $\nu$ (resp. the right child of $\nu$). Denote by $f_\nu$ the father\footnote{The first ancestor.} of $\nu$. If $\pi$ is a path from the node $\nu$ to the root of $t$ then we denote by $f_\pi$ the path from the node $f_\nu$ to the root of $t$. We also associate $a_{\pi_\nu}$, $b_{\pi_\nu}$, $w_{\pi_\nu}$, $l_{\pi_\nu}$ and $c_{\pi_\nu}$ to the node $\nu$ and we denote them respectively by $a_\nu$, $b_\nu$, $w_\nu$, $l_\nu$ and $c_\nu$. The set of leaves of a tree $t$ will be denoted by $L_t$. The weight $w(t)$ of the tree $t$ is defined as the sum of the weight of its leaves, that is $\displaystyle w(t)=\sum_{\nu\in L_t} w_\nu$.
\begin{proposition}
Each $Z(\A_n)$ partition system corresponds to a unique $n$-tree.
\label{propo4}
\end{proposition}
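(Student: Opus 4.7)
The plan is to associate to each partition system in $CFSZ(n)$ a binary tree that mirrors the recursive structure of Algorithm~\ref{alg:alg2}, and then to verify that this tree depends only on the partition system itself and not on any particular execution that produced it.

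The construction proceeds by induction on the length of the interval. For a $Z(\A_n)$ system, I would identify the unique block of the decomposition that covers state $0$ and extends all the way to the dummy state $\#$; call its pivot $j$. Explicitly, $j$ is the unique integer such that $\{j+1, j+2, \ldots, n, \#\}$ belongs to $dec(0)$: such a set exists because the blocks in $dec(0)$ partition $follow(0) = \{1, \ldots, n, \#\}$ and exactly one of them contains $\#$. I then place a node at the root, label it by $j$, and recurse on two sub-partitions: one on the upper-triangular submatrix indexed by rows $\{0, \ldots, j-1\}$ and columns $\{1, \ldots, j\}$, the other on the submatrix indexed by rows $\{j+1, \ldots, n\}$ and columns $\{j+2, \ldots, n+1\}$. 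The recursive calls return subtrees which become the left and right children of the root; base cases (empty intervals) give leaves.

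To show this construction produces a well-defined tree, I would argue by induction on $n$. The pivot $j$ at the top is determined by the partition alone, as just observed. Once $j$ is fixed, each of the two sub-partitions is, after relabeling the dummy state, a $Z$-partition on a smaller instance of $\A$, so the induction hypothesis applies and yields unique subtrees. Gluing them under the root produces a unique tree associated with the original partition, and the uniqueness propagates up from the leaves.

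The main obstacle is verifying the \emph{clean splitting} of the partition once the pivot $j$ is extracted: one must check that no block of the decomposition other than $B_j$ straddles the row boundary $j$ or the column boundary $j+1$, so that the remaining blocks really separate into two independent triangular sub-problems. This follows from the confinement property built into Algorithm~\ref{alg:alg2}: any block $B_{j'}$ generated by the recursive call on an interval $(n_1, n_2)$ is contained in the rectangle with rows in $\{n_1, \ldots, j'\}$ and columns in $\{j'+1, \ldots, n_2+1\}$, so the blocks produced by the left subcall $(0, j-1)$ stay strictly to the left of the pivot and those produced by the right subcall $(j+1, n)$ stay strictly to the right. A final leaf-counting argument then confirms that the resulting tree has the expected number of leaves to qualify as an $n$-tree in the paper's convention.
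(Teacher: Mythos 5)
Your construction is essentially the paper's: both identify the tree with the recursion tree of Algorithm~\ref{alg:alg2}, and your observation that the pivot is recoverable from the partition itself (as $j=\min(B)-1$ for the unique block $B$ of $dec(0)$ containing $\#$), together with the clean-splitting check, supplies the well-definedness step that the paper's one-line proof leaves implicit. The only content of the paper's proof that you omit is the accompanying induction showing $a_{\nu_q}=a(q)$ and $b_{\nu_q}=b(q)$ for each state $q$, hence that the weight of the tree equals the number of transitions of the CFS automaton; this is not required by the literal statement, but it is the property the proposition is invoked for in the sequel, so it is worth recording alongside your argument.
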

\begin{proof}
The idea is that if we follow the execution trace of the recursive Algorithm~\ref{alg:alg2} we can see that it corresponds to a binary tree whose weight is the number of transitions of the reduced automaton. And by induction we can prove that for each state $q$ we have $a_{\nu_q}=a(q)$ and $b_{\nu_q}=b(q)$. See Figure~\ref{figtree}.
\end{proof}
\begin{figure}[H]
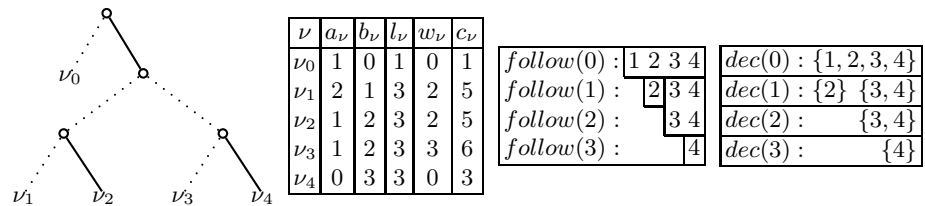

\begin{minipage}{0.30\textwidth}
\begin{center}
\scalebox{1}{%
\pstree[levelsep=.8cm]{\Tc{2pt}}
{
	\TR*[edge=\dedge]{$\nu_0$}^{}
	\pstree{\Tc{2pt}}
	{
		\pstree{\Tc[edge=\dedge]{2pt}}
		{
			\TR*[edge=\dedge]{$\nu_1$}^{}
			\TR*{$\nu_2$}^{}
		}

		\pstree{\Tc[edge=\dedge]{2pt}}
		{
			\TR*[edge=\dedge]{$\nu_3$}^{}
			\TR*{$\nu_4$}^{}
		}
	}
}
}
\end{center}
\end{minipage}
\begin{minipage}{0.22\textwidth}
$\begin{array}{|c|c|c|c|c|c|}\hline
\nu & a_\nu & b_\nu & l_\nu & w_\nu   & c_\nu\\ \hline
\nu_0& 1 & 0 & 1 & 0  & 1 \\
\nu_1& 2 & 1 & 3 & 2 & 5 \\
\nu_2& 1 & 2 & 3 & 2 & 5\\
\nu_3& 1 & 2 & 3 & 3 & 6\\
\nu_4& 0 & 3 & 3 & 0 & 3\\
\hline
\end{array}
$
\end{minipage}
\begin{minipage}{0.48\textwidth}
\begin{tabular}{cccc}
\begin{tabular}{|ccc|cc|}
\hline
\multicolumn{1}{|c|}{$follow(0):$} & $1$ & \multicolumn{1}{c}{$2$} & $3$ & $4$\\
\cline{2-5}
$follow(1):$ & \multicolumn{1}{c|}{} & $2$ & $3$ & $4$\\
\cline{3-3}
$follow(2):$ & & & $3$ & $4$\\
\cline{4-5}
$follow(3):$ & & \multicolumn{1}{c}{ } & \multicolumn{1}{c|}{ } & $4$\\
\hline
\end{tabular} & & \begin{tabular}{|cc|cc|c|}
\hline
$dec(0):$ & \multicolumn{4}{r|}{$\{1,2,3,4\}$}\\
\hline
$dec(1):$ & \multicolumn{2}{r}{$\{2\}$} & \multicolumn{2}{r|}{$\{3,4\}$}\\
\hline
$dec(2):$ & \multicolumn{2}{r}{ } & \multicolumn{2}{r|}{$\{3,4\}$}\\
\hline
$dec(3):$ & \multicolumn{2}{r}{ } & \multicolumn{2}{r|}{$\{4\}$}\\
\hline
\end{tabular}
\end{tabular}
\end{minipage}
\caption{A full binary $5$-tree and an associated $Z(\A_5)$ partition (left edges are represented by dotted lines and right edges with solid lines).}
\label{figtree}
\end{figure}
So, finding a minimal $Z(\A_n)$ partition system is reduced to finding a $n$-tree having minimal weight. Let us denote it by $Z$-tree of rank $n$.

Let $Split(t)$ be the function that returns the tree obtained from $t$ by replacing a leaf having minimal cost in $t$ by the unique $2$-tree. See Figure~\ref{SplitSplitAll}.
\begin{proposition}
The set of $Z$-trees can be computed inductively as follows:
\begin{itemize}
\item $1$-tree is the  $Z$-tree of rank one
\item if $t$ is a $Z$-tree (of rank $i$) then $Split(t)$ is a $Z$-tree (of rank $i+1$).
\end{itemize}
\end{proposition}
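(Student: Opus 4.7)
My approach is induction on the rank $n$, whose heart is the weight-change formula for $Split$ together with an exchange argument. First I would establish the weight-change formula: if $\nu$ is a leaf of a tree $t$ with coordinates $(a_\nu, b_\nu)$, then replacing $\nu$ by the unique $2$-tree produces two new leaves with coordinates $(a_\nu+1, b_\nu)$ and $(a_\nu, b_\nu+1)$, and leaves every other leaf untouched. A direct computation then gives
\[
w(Split_\nu(t)) - w(t) = (a_\nu+1) b_\nu + a_\nu (b_\nu+1) - a_\nu b_\nu = a_\nu b_\nu + a_\nu + b_\nu = c_\nu.
\]
In particular, splitting a minimum-cost leaf increases the weight by exactly $m(t) := \min_{\nu \in L_t} c_\nu$. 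The base case $n=1$ is immediate, since the only $1$-tree is a single node of weight $0$, trivially the $Z$-tree of rank one.

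For the inductive step, suppose the claim holds up to rank $n$. Let $t$ be a $Z$-tree of rank $n$, let $\nu$ be a minimum-cost leaf of $t$, and write $\mu_k$ for the minimum weight of a $k$-tree; set $\Delta_n := \mu_{n+1}-\mu_n$. The weight-change formula yields $w(Split_\nu(t)) = \mu_n + m(t)$, so the goal reduces to proving $m(t) = \Delta_n$. The lower bound $m(t) \ge \Delta_n$ is elementary: since $Split_\nu(t)$ is an $(n+1)$-tree, $\mu_n + m(t) = w(Split_\nu(t)) \ge \mu_{n+1}$.

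\textbf{The main obstacle} is the reverse inequality $m(t) \le \Delta_n$, i.e.\ that every $Z$-tree of rank $n$ carries a leaf of cost at most $\Delta_n$. The natural attack is to pick an optimal $(n+1)$-tree $t'$, isolate a cherry $\mu'$ (an internal node both of whose children are leaves, which exists in every tree with at least two leaves), and observe that the contracted $n$-tree $t'' = t'/\mu'$ satisfies $w(t'') + c_{\mu'}(t'') = \mu_{n+1}$; combined with $w(t'') \ge \mu_n$, this already forces $c_{\mu'}(t'') \le \Delta_n$. The delicate point is transferring this cheap-leaf property of the particular $n$-tree $t''$ to the given $Z$-tree $t$. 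I would attempt this through a structural study of $Z$-trees, exhibiting weight-preserving local moves (cherry rotations) that connect $t''$ — after first forcing $w(t'') = \mu_n$ by a careful choice of $t'$ and $\mu'$ — to any other $Z$-tree of rank $n$, and verifying that such moves do not raise the minimum leaf cost. Once $m(t) \le \Delta_n$ is in place, we have $w(Split_\nu(t)) = \mu_n + \Delta_n = \mu_{n+1}$, so $Split_\nu(t)$ is a $Z$-tree of rank $n+1$, closing the induction.
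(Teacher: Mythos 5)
Your weight-change identity $w(Split_\nu(t)) - w(t) = a_\nu b_\nu + a_\nu + b_\nu = c_\nu$ is exactly the computation that constitutes the paper's entire proof: the paper expands $-a_\mu b_\mu + (a_\mu+1)b_\mu + a_\mu(b_\mu+1)$, gets $w(t_{n+1}) = w(t_n) + c_\mu$, and then simply asserts that splitting the minimal-cost leaf of a minimal tree yields a minimal tree. You have correctly seen that this assertion is not a consequence of the formula alone: since every $(n+1)$-tree is obtained from \emph{some} $n$-tree by splitting \emph{some} leaf (contract any cherry), one has $\mu_{n+1} = \min_{t}\bigl(w(t) + m(t)\bigr)$ over all $n$-trees $t$, and what must be shown is that this minimum is attained at every minimum-weight $t$, i.e.\ that $m(t) = \mu_{n+1} - \mu_n$ for \emph{every} $Z$-tree $t$ of rank $n$. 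Your lower bound $m(t) \ge \Delta_n$ is fine.

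The genuine gap is that the reverse inequality $m(t) \le \Delta_n$ is announced but not proved. Contracting a cherry of an optimal $(n+1)$-tree gives you one particular $n$-tree $t''$ with $w(t'') + c_{\mu'}(t'') = \mu_{n+1}$, but nothing yet guarantees $w(t'') = \mu_n$, nor that the cheap-leaf property transfers to the \emph{given} $Z$-tree $t$; a priori the minimum in $\min_t(w(t)+m(t))$ could be attained at a non-minimal $t''$, or two minimum-weight $n$-trees could have different minimum leaf costs, in which case $Split$ of the wrong one would fail to be optimal. Your proposed remedy --- weight-preserving cherry rotations connecting all minimum-weight $n$-trees while controlling the minimum leaf cost --- is a plan, not an argument: you neither exhibit such moves nor verify their properties. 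To be fair, the paper's own proof skips precisely this step (it offers no justification beyond the weight-change formula), so your write-up is at least as rigorous and considerably more honest about where the difficulty lies; but as it stands neither text establishes the proposition, and the missing exchange argument is the whole content of the claim.
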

\begin{proof}
Let $t_n$ be a $Z$-tree of rank $n$. To get a tree $t_{n+1}$ of rank $n+1$ from $t_n$ we have to split a leaf $\mu$. The weight of $t_{n+1}$ is:
\begin{eqnarray*}
\displaystyle
w(t_{n+1})&=&\sum_{\nu\in L_{t_{n+1}}} w_{\nu}\\
					&=&(\sum_{\nu\in L_{t_{n}}} w_{\nu}) - w_{\mu} + w_{left-child(\mu)}+ w_{right-child(\mu)}\\
					&=&w(t_{n}) - a_{\mu}b_{\mu} + (a_{\mu}+1)b_{\mu} + a_{\mu}(b_{\mu}+1)\\
					&=&w(t_{n})+c_{\mu}
\end{eqnarray*}
If $\mu$ is the leaf of $t_n$ which have the minimal coast, then, the tree $t_{n+1}$ will have  minimal weight.
\end{proof}
So, this inductive construction allows us to have the minimal weight tree. The difference of weights between two consecutive minimal trees is exactly the cost of the split leaf. All $Z$-trees of rank less than $n$, can be generated by the following Algorithm~\ref{alg:alg4}.
\begin{algorithm}[H]
\caption{MinZtree $(n)$}
\label{alg:alg4}
\begin{algorithmic}[1]
\REQUIRE $ n\in \mathbb{N}$
\ENSURE $Z$-tree of rank less than $n$
\STATE $t \leftarrow \mbox{$1$-tree}$
\FOR{$i$ = $1$ to $n$}
\STATE $t \leftarrow Split(t)$
\ENDFOR
\end{algorithmic}
\end{algorithm}
\begin{theorem}
Algorithm~\ref{alg:alg4} computes one $Z$-tree of rank $i$ for all $i=1 \mbox{ to } n$ in $O(n \log n)$ time.
\end{theorem}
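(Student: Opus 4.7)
\bigskip

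\noindent\textbf{Proof plan.}
My plan is to separate the two claims of the theorem: the \emph{correctness} (that after the $i$-th iteration the variable $t$ holds a $Z$-tree of rank $i+1$) and the \emph{complexity} (that the total running time is $O(n\log n)$). The correctness part is essentially free from the preceding proposition; the real work is in choosing a data structure that supports the repeated $Split$ operation efficiently.

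\smallskip

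\noindent For correctness, I would argue by induction on $i$. The base case is immediate since the initial value of $t$ is the unique $1$-tree, which is a $Z$-tree of rank $1$. For the inductive step, I would invoke the previous proposition, which states that if $t$ is a $Z$-tree of rank $i$ then $Split(t)$ is a $Z$-tree of rank $i+1$. Thus after $n$ iterations the variable $t$ has successively passed through $Z$-trees of ranks $1,2,\dots,n+1$, as required.

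\smallskip

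\noindent For the complexity, the key observation is that when a leaf $\mu$ is replaced by the unique $2$-tree, the weights $(a_\nu,b_\nu)$ of every other leaf $\nu$ are unchanged, because the path from the root to $\nu$ does not traverse $\mu$. Consequently, the only updates needed per call to $Split$ are: (i) locating a leaf of minimum cost, (ii) removing it, and (iii) inserting its two children, whose weights $(a_\mu+1,b_\mu)$ and $(a_\mu,b_\mu+1)$ are immediately computable from the parent. This suggests maintaining the current set of leaves in a min-priority queue keyed by $c_\nu=a_\nu b_\nu+a_\nu+b_\nu$. Each iteration then performs one \textsc{ExtractMin} and two \textsc{Insert} operations, each in $O(\log n)$ time on a binary heap of size at most $n+1$. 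Summing over $n$ iterations yields the announced $O(n\log n)$ bound; the final tree structure itself is rebuilt on the side in $O(1)$ per iteration by attaching the two new leaves to the extracted one.

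\smallskip

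\noindent The only potentially delicate point is the verification that $Split$ in the inductive proposition is \emph{well-defined} only up to the choice of a minimum-cost leaf; different choices yield different $Z$-trees, all of the same weight. I would make this explicit so that the algorithm is unambiguously justified in picking the heap's root, regardless of ties. No deeper obstacle arises, and the overall proof is a short combination of the preceding proposition and a standard priority-queue amortization argument.
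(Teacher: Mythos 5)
Your proposal is correct and follows essentially the same route as the paper: the paper's proof likewise relies on maintaining the leaf costs in a dynamic structure (a priority queue) supporting logarithmic-time extraction of the minimum-cost leaf and logarithmic-time insertion of the two new leaves produced by $Split$. Your additional remarks --- the explicit induction for correctness via the preceding proposition, and the observation that splitting a leaf leaves the weights of all other leaves unchanged --- merely make explicit what the paper leaves implicit.
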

\begin{proof}
At each step of this algorithm we look for a minimal cost leaf and then we split it. We can maintain the costs of the leaves in a dynamic structure which allow us a logarithmic time search for the minimal cost leaf and also a logarithmic time insertion of the two leaves obtained from the split function.
\end{proof}
It is clear that for a given $n$, there may exist several $Z$-trees of rank $n$.

In the following we introduce a subclass of full binary trees (called $P$-trees), for which the $Z$-trees are unique. We do that in order to study the size-complexity of the reduced automata (the number of transitions).
\subsection{$P$-Trees}\label{B}
We denote by $M_t$ the set of leaves having minimal cost in $t$, that is: $\displaystyle M_t=\arg \min_{\nu\in L_t} c_\nu$. The function $SplitAll(t)$ returns the tree obtained from $t$ by replacing every leaf in $M_t$ by the unique $2$-tree. See Figure~\ref{SplitSplitAll}.
\begin{figure}[H]
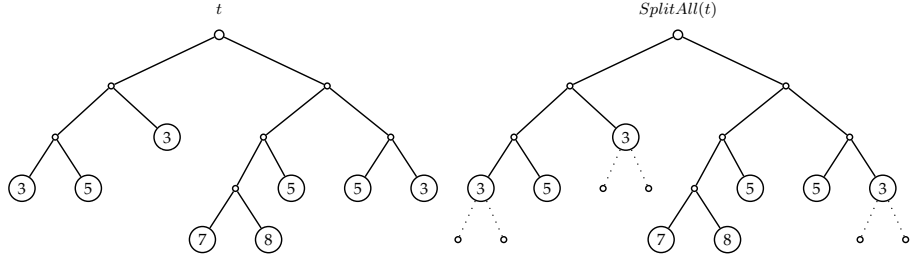

\centering
\resizebox{\textwidth}{!}{%
 \pstree[levelsep=1cm]{\Tc{3pt}~[tnpos=a]{$t$}}
 {
 \pstree{\Tc{2pt}}
 {
 	\pstree{\Tc{2pt}}
		{
	 	\pstree[linecolor=white]{\Tcircle{3}}
			{
			 \Tc{2pt}
			 \Tc{2pt}
			}
		 \Tcircle{5}
		}
		\pstree[linecolor=white]{\Tcircle{3}}
		{
		 \Tc{2pt}
		 \Tc{2pt}
		}
	}
 \pstree{\Tc{2pt}}
 {
 	\pstree{\Tc{2pt}}
		{
  	\pstree{\Tc{2pt}}
	 	{
			 \Tcircle{7}
			 \Tcircle{8}
		 }
		 \Tcircle{5}
		}
		\pstree{\Tc{2pt}}
		{
 	 \Tcircle{5}
	 	\pstree[linecolor=white]{\Tcircle{3}}
			{
			 \Tc{2pt}
			 \Tc{2pt}
			}
		}
	}
 }
%

 \pstree[levelsep=1cm]{\Tc{3pt}~[tnpos=a]{$SplitAll(t)$}}
 {
 \pstree{\Tc{2pt}}
 {
 	\pstree{\Tc{2pt}}
		{
	 	\pstree[edge=\dedge]
	 	{\Tcircle{3}}
			{
			 \Tc{2pt}
			 \Tc{2pt}
			}
		 \Tcircle{5}
		}
		\pstree[edge=\dedge]
		{\Tcircle{3}}
		{
		 \Tc{2pt}
		 \Tc{2pt}
		}
	}
 \pstree{\Tc{2pt}}
 {
 	\pstree{\Tc{2pt}}
		{
  	\pstree{\Tc{2pt}}
	 	{
			 \Tcircle{7}
			 \Tcircle{8}
		 }
		 \Tcircle{5}
		}
		\pstree{\Tc{2pt}}
		{
 	 \Tcircle{5}
	 	\pstree[edge=\dedge]
	 	{\Tcircle{3}}
			{
			 \Tc{2pt}
			 \Tc{2pt}
			}
		}
	}
 }
}
	\caption{A $8$-tree $t$ with a $Split$ tree and its $SplitAll$ tree. Values in the nodes are costs.}
	\label{SplitSplitAll}
\end{figure}
\begin{definition}
The class $(t_n)_{n>0}$ of $P$-trees is defined inductively as follows:
\begin{itemize}
\item $t_1=1$-tree and
\item $t_{(n+1)}=SplitAll(t_n)$.
\end{itemize} See Figure~\ref{Ptrees}.
\end{definition}
\begin{remark}
Notice that if $\nu \in M_{t_n}$ then $c_\nu=(n-1)$. This can be established by induction on $n$.  Therefore, to get $t_{(n+1)}$ from $t_n$ we split leaves of cost $(n-1)$.
\label{remark1}
\end{remark}
\newgray{lightgray}{.75}
\begin{figure}[H]
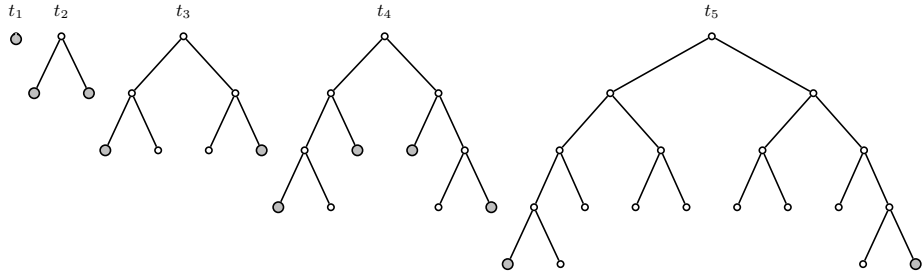

	\centering
\resizebox{1.00\textwidth}{!}{%
\pstree[levelsep=.05cm,linecolor=lightgray,fillstyle=solid]{\Tc{2pt}~[tnpos=a]{$t_1$}}
{
 \Tc[linecolor=black,fillcolor=lightgray,fillstyle=solid]{3pt}
}

\pstree[levelsep=1cm]{\Tc{2pt}~[tnpos=a]{$t_2$}}
{
			\Tc[fillcolor=lightgray,fillstyle=solid]{3pt}
			\Tc[fillcolor=lightgray,fillstyle=solid]{3pt}
}

\pstree[levelsep=1cm]{\Tc{2pt}~[tnpos=a]{$t_3$}}
{
	\pstree{\Tc{2pt}}
	{
			\Tc[fillcolor=lightgray,fillstyle=solid]{3pt}
			\Tc{2pt}
	}
	\pstree{\Tc{2pt}}
	{
			\Tc{2pt}
			\Tc[fillcolor=lightgray,fillstyle=solid]{3pt}
	}
}

\pstree[levelsep=1cm]{\Tc{2pt}~[tnpos=a]{$t_4$}}
{
	\pstree{\Tc{2pt}}
	{
			\pstree{\Tc{2pt}}
			{
				\Tc[fillcolor=lightgray,fillstyle=solid]{3pt}
				\Tc{2pt}
			}
			\Tc[fillcolor=lightgray,fillstyle=solid]{3pt}
	}
	\pstree{\Tc{2pt}}
	{
			\Tc[fillcolor=lightgray,fillstyle=solid]{3pt}
			\pstree{\Tc{2pt}}
			{
				\Tc{2pt}
				\Tc[fillcolor=lightgray,fillstyle=solid]{3pt}
			}
	}
}

\pstree[levelsep=1cm]{\Tc{2pt}~[tnpos=a]{$t_5$}}
{
	\pstree{\Tc{2pt}}
	{
			\pstree{\Tc{2pt}}
			{
				\pstree{\Tc{2pt}}
				{
					\Tc[fillcolor=lightgray,fillstyle=solid]{3pt}
					\Tc{2pt}
				}
				\Tc{2pt}
			}
							\pstree{\Tc{2pt}}
				{
					\Tc{2pt}
					\Tc{2pt}
				}
	}
	\pstree{\Tc{2pt}}
	{
							\pstree{\Tc{2pt}}
				{
					\Tc{2pt}
					\Tc{2pt}
				}
			\pstree{\Tc{2pt}}
			{
				\Tc{2pt}
				\pstree{\Tc{2pt}}
				{
					\Tc{2pt}
					\Tc[fillcolor=lightgray,fillstyle=solid]{3pt}
				}
			}
	}
}
}
	\caption{The first four $P$-trees. Leaves with big circle have minimal cost.}
	\label{Ptrees}
\end{figure}
\subsection{Eratosthenes-Pascal's Triangle}
The Eratosthenes-Pascal's Triangle is constructed from Pascal's Triangle as follows: we interleave each column $k$ of Pascal's Triangle with $(k-1)$ zeros. The element of the Eratosthenes-Pascal's Triangle at the $n^{th}$ row  and the $k^{th}$ column is denoted by $T_n^k$ with $n\geq 1$ and $k\geq 1$.
\begin{center}
\begin{tabularx}{.5\textwidth}{l|XXXXXXXXXXX}
&1&2&3&4&5&6&7&8&9&10&$\cdots$\\\hline
1&1&&&&&&&&&&\\
\textbf{2}&1&1&&&&&&&&&\\
\textbf{3}&1&0&1&&&&&&&&\\
4&1&2&0&1&&&&&&&\\
\textbf{5}&1&0&0&0&1&&&&&&\\
6&1&3&3&0&0&1&&&&&\\
\textbf{7}&1&0&0&0&0&0&1&&&&\\
8&1&4&0&4&0&0&0&1&&&\\
9&1&0&6&0&0&0&0&0&1&&\\
10&1&5&0&0&5&0&0&0&0&1&\\
\textbf{11}&1&0&0&0&0&0&0&0&0&0&1\\
\end{tabularx}
\end{center}
Let $n$ be a natural number. We denote by $D_n$ the set of divisors of $n$.
\begin{proposition}
\begin {eqnarray*}
 T_n^k&=&
 \left\{
 \begin{array}{ll}
  \displaystyle \binom{\displaystyle\frac{n}{k}+k-2}{k-1} & \mbox{ if } k \in D_n\\
  0 & \mbox{ otherwise. }
\end{array} \right.
\end {eqnarray*}
\end{proposition}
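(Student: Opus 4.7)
\bigskip\noindent\textbf{Proof plan.}
The proposal is to read $T_n^k$ directly off the construction. First, I recall the indexing of the ordinary Pascal triangle: the $k$-th column (1-indexed) begins at row $k$ with the entry $\binom{k-1}{k-1}=1$ and, more generally, its $m$-th entry ($m\ge 1$) is $\binom{k+m-2}{k-1}$, sitting in Pascal's row $k+m-1$. I would take a moment to pin this indexing down so that the subsequent bookkeeping is unambiguous.

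Next, I would unpack the definition of the Eratosthenes-Pascal triangle: column $k$ is obtained by keeping Pascal's $k$-th column as-is but inserting $(k-1)$ zeros between any two consecutive entries, the first entry remaining at row $k$. Therefore the $m$-th nonzero entry of column $k$ occupies row $k+(m-1)k = mk$, and every other row of column $k$ carries a zero. This already settles the second branch of the formula: if $k \notin D_n$ then $n$ is not a positive multiple of $k$, so row $n$ of column $k$ is zero.

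For the first branch, if $k \in D_n$ I set $m = n/k$ so that $n = mk$; then by the preceding paragraph $T_n^k$ is precisely the $m$-th entry of Pascal's column $k$, which equals $\binom{k+m-2}{k-1} = \binom{n/k + k - 2}{k-1}$, the claimed value. I would finish with a couple of sanity checks against the displayed triangle (columns $k=1$ and $k=2$, and the diagonal $k=n$, which gives $T_n^n = \binom{n-1}{n-1}=1$). The only delicate point is the off-by-one when counting the starting row and the rank of an entry inside Pascal's column, so I would be explicit about those indices; beyond that, there is no real obstacle, since the whole statement is essentially a reindexing of the construction.
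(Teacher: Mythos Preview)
Your proposal is correct and follows essentially the same route as the paper: both arguments track where the entry $\binom{i-1}{k-1}$ of Pascal's column $k$ lands after the interleaving, obtaining that it sits in row $(i-k+1)k$ of the Eratosthenes--Pascal triangle (your parameter $m$ is just $i-k+1$), and then invert this relation to read off $T_n^k$. Your version is a bit more explicit about the indexing and the zero case, but the underlying idea is identical.
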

\begin{proof}
The element of the Pascal's Triangle at the $n^{th}$ row and the $j^{th}$ column is $\binom{i-1}{j-1}$. This element is moved in the Eratosthenes-Pascal's Triangle to the row $r=(i-j+1)j$ in the same column $j$. We have then
\begin{eqnarray*}
T_{(i-j+1)j}^j=\binom{i-1}{j-1} &\mbox{ Thus }& T_r^j=\binom{\displaystyle\frac{r}{j}+j-2}{j-1}
\end{eqnarray*}
\end{proof}
Let $S_n$ be the sum of the $n^{th}$ row in the Eratosthenes-Pascal's Triangle.
\begin{eqnarray*}
S_n&=&\displaystyle\sum_{k = 1}^{n}T_n^k=\sum_{k \in D_n}{}\binom{ \displaystyle\frac{n}{k}+k-2}{k-1}
\end{eqnarray*}
\subsection{$P$-trees and Eratosthenes-Pascal's Triangle}
In this section we will describe the link between the Eratosthenes-Pascal's Triangle and the set of $P$-trees.
\begin{theorem}
The sum of the elements of the $n^{th}$ row of Eratosthenes-Pascal's Triangle's, $S_n$, is exactly $|M_{t_n}|$ the number of leaves of minimal cost in the $P$-tree $t_n$.
\end{theorem}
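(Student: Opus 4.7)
The plan is to count $|M_{t_n}|$ directly and match it against the closed-form expression for $S_n$ obtained just above. The starting algebraic remark is
\[
c_\nu \;=\; a_\nu b_\nu + a_\nu + b_\nu \;=\; (a_\nu+1)(b_\nu+1) - 1,
\]
which, combined with Remark~\ref{remark1} asserting that $c_\nu = n-1$ for every $\nu\in M_{t_n}$, identifies the positions of minimum-cost leaves of $t_n$ as exactly the pairs $(a,b)$ with $(a+1)(b+1)=n$ --- equivalently, the pairs $(d-1,\,n/d-1)$ indexed by the divisors $d\in D_n$.

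Next I would count, for each such divisor $d$, the number of leaves of $t_n$ sitting at position $(d-1, n/d-1)$. Label each node by its root-to-node path, an L/R string of length $a+b = d + n/d - 2$ with exactly $a = d-1$ left letters; the claim is that every one of the $\binom{a+b}{a}$ such strings is realized as a leaf of $t_n$. The decisive observation is that costs strictly increase down any path: from $(a_i, b_i)$ an L-step adds $b_i + 1 > 0$ to the cost and an R-step adds $a_i + 1 > 0$. By the inductive definition $t_{n+1} = SplitAll(t_n)$, a node is a leaf of $t_n$ precisely when its parent was split before time $n$ (parent cost $\le n-2$) while the node itself has not been split (its own cost is $\ge n-1$). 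Strict cost-monotonicity forces every ancestor of a node of cost exactly $n-1$ to have cost strictly less than $n-1$, hence to have been split by time $n$, so the node indeed survives as a leaf of $t_n$.

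Combining the two steps yields
\[
|M_{t_n}| \;=\; \sum_{d\in D_n} \binom{d + n/d - 2}{d-1},
\]
and the preceding Proposition, which evaluates $T_n^k$ as $\binom{n/k + k - 2}{k-1}$ when $k \in D_n$ and as $0$ otherwise, rewrites the right-hand side as $\sum_{k=1}^{n} T_n^k = S_n$. The main obstacle I anticipate is not the algebra or the binomial count but the realizability claim: one must guarantee that every L/R string of the prescribed shape really does label a leaf of $t_n$, rather than merely a descendant appearing in some later $t_m$. The strict cost-monotonicity along root-to-node paths is the clean way to see this, bypassing any detailed bookkeeping of the order in which the successive $SplitAll$ operations fire.
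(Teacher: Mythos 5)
Your proposal is correct and follows essentially the same route as the paper: it counts the paths of cost $n-1$ by the number of left edges (equivalently by the divisors $d$ of $n$, via $c_\nu=(a_\nu+1)(b_\nu+1)-1$), matching the entries $T_n^k$, and then establishes that every such path is realized as a leaf of $t_n$ by the parent-cost argument (the paper's version phrases this as a proof by contradiction on the father's cost, yours directly via strict cost monotonicity along ancestors, but the underlying induction is identical). The factorized form of the cost is a slightly cleaner way to surface the divisor condition, but the substance coincides with the paper's two lemmas on $F_{(n-1)}$.
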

To prove this theorem we introduce a family $F_n$ which is in bijection with both the Eratosthenes-Pascal's Triangle's rows and with $P$-trees.
We focus on the following question:
\emph{Given a natural number $n$ what are all the possible paths that have $(n-1)$ as cost?}
To answer this question, we define $F_{(n-1)}$ as the set of all paths of cost $(n-1)$:
\begin{eqnarray*}
F_{(n-1)}&=&\{\pi \;\mid\; c_\pi=(n-1) \}
\end{eqnarray*}
\begin{lemma}
For all $n\geq 1$, $|F_{(n-1)}|=S_n$.
\end{lemma}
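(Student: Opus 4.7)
The plan is to count $|F_{(n-1)}|$ by first regrouping paths according to their pair $(a_\pi, b_\pi)$ and then using a clean factorization of the cost to turn the indexing set into the divisors of $n$.

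The key algebraic observation is that
\[
c_\pi \;=\; a_\pi + b_\pi + a_\pi b_\pi \;=\; (a_\pi+1)(b_\pi+1) - 1.
\]
Hence $c_\pi = n-1$ if and only if $(a_\pi+1)(b_\pi+1) = n$. In particular, a path in $F_{(n-1)}$ is determined by its sequence of left/right steps (of given lengths $a_\pi$ and $b_\pi$), and the number of such sequences with exactly $a$ left edges and $b$ right edges equals $\binom{a+b}{a}$. Grouping paths by their weights gives
\[
|F_{(n-1)}| \;=\; \sum_{\substack{a,b \geq 0 \\ (a+1)(b+1)=n}} \binom{a+b}{a}.
\]

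The next step is the change of variable $k = a+1$. Since $a,b \geq 0$, the factor $k$ must be a positive divisor of $n$ and then $b+1 = n/k$, i.e.\ $b = n/k - 1$ and $a+b = k + n/k - 2$. Thus
\[
|F_{(n-1)}| \;=\; \sum_{k \in D_n} \binom{k + \frac{n}{k} - 2}{k-1},
\]
which by the formula for $T_n^k$ established in the preceding proposition is exactly $S_n = \sum_{k=1}^{n} T_n^k$ (the terms for $k \notin D_n$ vanish).

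The proof is essentially a two-line combinatorial identity, so there is no real obstacle beyond spotting the factorization $c_\pi + 1 = (a_\pi+1)(b_\pi+1)$; once this is in hand the divisor structure of the Eratosthenes-Pascal row appears automatically, and the binomial weight $\binom{a+b}{a}$ matches the entry $T_n^k$ exactly.
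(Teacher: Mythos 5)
Your proof is correct and follows essentially the same route as the paper: both arguments partition $F_{(n-1)}$ by the number of left edges, use the factorization $c_\pi+1=(a_\pi+1)(b_\pi+1)$ to force $a_\pi+1$ to be a divisor of $n$, count arrangements by $\binom{a_\pi+b_\pi}{a_\pi}$, and reindex with $k=a_\pi+1$ to recover $T_n^k$ and hence $S_n$. Your version merely makes the factorization explicit where the paper leaves it implicit.
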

\begin{proof}
Let $F_{(n-1)}^i$ for $0 \leq i \leq (n-1)$ be the set of paths of $F_{(n-1)}$ having $i$ left edges. We have $\displaystyle F_{(n-1)}=\bigcup^{(n-1)}_{i=0} F_{(n-1)}^i$ where
\begin{eqnarray*}
F_{(n-1)}^i&=&\{\pi \in F_{(n-1)} \;\mid\; a_\pi=i \}\\
   &=&\{\pi \;\mid\; (a_\pi b_\pi+a_\pi+b_\pi=(n-1))\wedge(a_\pi=i) \}\\
   &=&\{\pi \;\mid\; (b_\pi=\displaystyle\frac{n}{i+1}-1) \wedge (a_\pi=i) \}
\end{eqnarray*}
\begin {eqnarray*}
 \mbox{So, we get}\;\;|F_{(n-1)}^i|&=&
 \left\{
 \begin{array}{ll}
  \displaystyle \binom{\displaystyle\frac{n}{i+1}+i-1}{i} & \mbox{ if } (i+1) \in D_{n}\\
  0& \mbox{ otherwise. }\\
\end{array} \right.
\end {eqnarray*}
This corresponds to the different ways to arrange $i$ left edges in a path of length $\displaystyle\frac{n}{i+1}+i-1$. Let $k=i+1$ then
\begin {eqnarray*}
 |F_{(n-1)}^{k-1}|&=&
 \left\{
 \begin{array}{ll}
  \displaystyle \binom{\displaystyle\frac{n}{k}+k-2}{k-1} & \mbox{ if } k \in D_{n}\\
  0 & \mbox{ otherwise. }\\
\end{array} \right.
\end {eqnarray*}
Thus $|F_{(n-1)}^{k-1}|=T_{n}^{k}$.
That is for $0 \leq i < n$ we get $|F_{(n-1)}^{i}|=T_{n}^{i+1}$.
Finally
\begin {eqnarray*}
|F_{(n-1)}|&=&\sum^{(n-1)}_{i=0} |F_{(n-1)}^{i}|=\sum^{(n-1)}_{i=0} T_{n}^{i+1}=\sum^{n}_{i=1} T_{n}^{i}=S_{n}
\end {eqnarray*}
Therefore, we can associate the $n^{th}$ row of the Eratosthenes-Pascal's Triangle to $F_{(n-1)}$.
\end{proof}
\newgray{lightgray}{.8}
\begin{figure}[H]
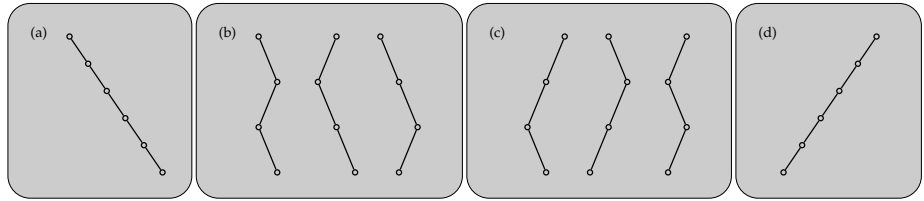

	\centering
\resizebox{1.00\textwidth}{!}{%
\psframebox[fillstyle=solid,fillcolor=lightgray,framesep=14pt,linearc=14pt,cornersize=absolute,linewidth=0.5pt]{%
(a)
\pstree[levelsep=0.6cm]{\Tc{2pt}}
{
	\Tn
	\pstree{\Tc{2pt}}
	{
		\Tn
		\pstree{\Tc{2pt}}
		{
			\Tn
			\pstree{\Tc{2pt}}
			{
				\Tn
				\pstree{\Tc{2pt}}
				{
					\Tn
					\Tc{2pt}
				}
			}
		}
	}
}
}
\psframebox[fillstyle=solid,fillcolor=lightgray,framesep=14pt,linearc=14pt,cornersize=absolute,linewidth=0.5pt]{%
(b)
\pstree[levelsep=1cm]{\Tc{2pt} }
{
	\Tn
	\pstree{\Tc{2pt}}
	{
		\pstree{\Tc{2pt}}
		{
			\Tn
			\Tc{2pt}
		}
		\Tn
	}
}

\pstree[levelsep=1cm]{\Tc{2pt} }
{
	\pstree{\Tc{2pt}}
	{
		\Tn
		\pstree{\Tc{2pt}}
		{
			\Tn
			\Tc{2pt}
		}
		}
	\Tn
}

\pstree[levelsep=1cm]{\Tc{2pt} }
{
	\Tn
	\pstree{\Tc{2pt}}
	{
		\Tn
		\pstree{\Tc{2pt}}
		{
			\Tc{2pt}
			\Tn
		}
		}	
}

}
\psframebox[fillstyle=solid,fillcolor=lightgray,framesep=14pt,linearc=14pt,cornersize=absolute,linewidth=0.5pt]{%
(c)
\pstree[levelsep=1cm]{\Tc{2pt} }
{
	\pstree{\Tc{2pt}}
	{
		\pstree{\Tc{2pt}}
		{
			\Tn
			\Tc{2pt}
		}
			\Tn
		}
	\Tn	
}

\pstree[levelsep=1cm]{\Tc{2pt} }
{
	\Tn
	\pstree{\Tc{2pt}}
	{
		\pstree{\Tc{2pt}}
		{
			\Tc{2pt}
			\Tn
		}
			\Tn
		}
}

\pstree[levelsep=1cm]{\Tc{2pt} }
{
	\pstree{\Tc{2pt}}
	{
		\Tn
		\pstree{\Tc{2pt}}
		{
			\Tc{2pt}
			\Tn
		}
	}
	\Tn
}
}
\psframebox[fillstyle=solid,fillcolor=lightgray,framesep=14pt,linearc=14pt,cornersize=absolute,linewidth=0.5pt]{%
(d)
\pstree[levelsep=0.6cm]{\Tc{2pt} }
{
	\pstree{\Tc{2pt}}
	{
		\pstree{\Tc{2pt}}
		{
			\pstree{\Tc{2pt}}
			{
				\pstree{\Tc{2pt}}
				{
					\Tc{2pt}
					\Tn
				}
				\Tn
			}
			\Tn
		}
		\Tn
	}
	\Tn
}
}
}
	\caption{All paths of cost $5$.
	(a) The unique path of cost $5$ with no  left edges which corresponds to $\binom{5}{0}=1$.
	(b) The three paths of cost $5$ with one left edges which corresponds to $\binom{3}{1}=3$.
	(c) The three paths of cost $5$ with two left edges which corresponds to $\binom{3}{2}=3$.
	(d) The unique path of cost $5$ with five left edges which corresponds to $\binom{5}{5}=1$.
	There is no paths of cost $5$ with three or four left edges that correspond to zeros in the $6^{th}$ row of the Eratosthenes-Pascal's Triangle.
	}
	\label{f5}
\end{figure}
\begin{lemma}
For all $n\geq 1$, $|F_{(n-1)}|=|M_{t_n}|$.
\end{lemma}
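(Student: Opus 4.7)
The plan is to construct an explicit bijection $\phi \colon M_{t_n} \to F_{n-1}$ by sending each minimal-cost leaf $\nu$ to its path $\pi_\nu$. This path has $a_\nu$ left edges and $b_\nu$ right edges; by Remark~\ref{remark1} its cost is $c_\nu = n-1$, so $\phi(\nu) \in F_{n-1}$, and $\phi$ is injective since distinct leaves of a tree correspond to distinct root-to-node L/R addresses. The substance of the argument lies in proving surjectivity, and for this I would first establish a structural characterization of the node set of $t_n$.

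For any node $\nu$ of the infinite complete binary tree, set $\rho(\nu) := (a_\nu+1)(b_\nu+1) = c_\nu+1$, using the convention $\rho(f_\nu) = 0$ when $\nu$ is the root. The claim is
\[
\nu \text{ is a node of } t_n \iff \rho(f_\nu) < n.
\]
The proof proceeds by induction on $n$, with the base $n=1$ immediate (since $t_1$ is just the root) and the step relying on $t_{n+1} = SplitAll(t_n)$ combined with Remark~\ref{remark1}: the leaves split at that stage are exactly those with $c_\nu = n-1$, i.e.\ with $\rho(\nu) = n$, so the nodes appearing for the first time in $t_{n+1}$ are precisely the children of such leaves, namely the $\nu$ with $\rho(f_\nu) = n$.

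Because $\rho$ is strictly increasing along any root-to-descendant path, the characterization specializes to: $\nu$ is a leaf of $t_n$ iff $\rho(f_\nu) < n \leq \rho(\nu)$. A leaf lies in $M_{t_n}$ iff $c_\nu = n-1$, equivalently $\rho(\nu) = n$, and $\rho(\nu) = n$ already forces $\rho(f_\nu) < n$ by strict monotonicity. Hence
\[
M_{t_n} = \{\, \nu : \rho(\nu) = n \,\}.
\]
Conversely, given $\pi \in F_{n-1}$ realized as an L/R-string with $a$ left and $b$ right edges, one has $(a+1)(b+1) = n$; the node $\nu$ of the infinite binary tree whose address is this string satisfies $\rho(\nu) = n$ and therefore lies in $M_{t_n}$, with $\phi(\nu) = \pi$. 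This proves $\phi$ is surjective.

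The principal obstacle is the inductive structural characterization above, which must simultaneously track the non-minimal leaves that survive unchanged from $t_n$ to $t_{n+1}$ and the new children created by splitting the minimal ones. Once that auxiliary lemma is in hand, the equality $|M_{t_n}|=|F_{n-1}|$ follows at once, because both sides are parametrized by exactly the same combinatorial data, namely L/R-strings $s$ whose numbers $a$ and $b$ of L's and R's satisfy $(a+1)(b+1) = n$.
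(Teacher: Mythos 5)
Your proof is correct and follows essentially the same route as the paper: an induction on the $SplitAll$ construction, using Remark~\ref{remark1} together with the fact that a node's cost strictly exceeds its father's, to show that every abstract path of cost $n-1$ is realized as a minimal-cost leaf of $t_n$. Your explicit characterization of the node set of $t_n$ as $\{\nu : \rho(f_\nu)<n\}$ is a cleaner and more careful packaging of the paper's contradiction argument (and tidies the off-by-one slippage in its write-up), but it is not a genuinely different method.
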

\begin{proof}
We will show that each set $F_n$ is associated with the $P$-tree $t_n$. We proceed by induction. Assume that for all $k \leq n$, the $P$-tree $t_n$
contains all paths of cost less or equal to $n$, that is, for all $k \leq n$, $F_k$  is in the $P$-tree $t_n$.
From this hypothesis we should show that $t_{(n+1)}=SplitAll(t_n)$ contains all the paths of cost less or equal to $(n+1)$.
Of course, $t_{(n+1)}$ contains all the paths of cost less or equal to $n$ because the tree $t_{(n+1)}$ is obtained from $t_n$. However, \emph{does $t_{(n+1)}$ contain all paths of cost equal to $(n+1)$?} To answer this question, we proceed by absurd. Let $\pi$ be a path with cost $c_\pi=(n+1)$ which is not in $t_{(n+1)}$. It is clear that the cost of the father of $\pi$ is:
\begin {eqnarray*}
 \displaystyle
 c_{f_\pi}&=&
 \left\{
 \begin{array}{ll}
  (a_\pi-1) b_\pi+(a_\pi-1)+b_\pi & \mbox{ if $\pi$ is a left child},\\
  a_\pi (b_\pi-1)+a_\pi+(b_\pi-1) & \mbox{ if $\pi$ is a right child}.
\end{array} \right.
\end{eqnarray*}
As $c_{f_\pi} < c_\pi$ then $c_{f_\pi} \leq n$. From this we deduce that when $c_{f_\pi} < n$, the father $f_\pi$ of $\pi$ was split by the $SplitAll()$ function within an earlier or it will be split in the current tree $t_n$ in the case where $c_{f_\pi} = n$ (see Remark~\ref{remark1}). In both cases, the path $\pi$ is necessarily in $t_{(n+1)}$.
\end{proof}
From the two last lemmas, we construct a bijection between the $P$-trees and the Eratosthenes-Pascal's Triangle's rows, and we claim following corollary:
\begin{corollary}\label{Sdefn}
\label{cor1}
Let $t_n$ be a $P$-tree. Then $|M_{t_n}|=|F_{(n-1)}|=S_n$.
\end{corollary}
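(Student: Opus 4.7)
The plan is simply to chain the two preceding lemmas: the first gives $|F_{(n-1)}|=S_n$ and the second gives $|M_{t_n}|=|F_{(n-1)}|$, so transitivity yields
\[
|M_{t_n}| \;=\; |F_{(n-1)}| \;=\; S_n,
\]
which is precisely the claim. Thus, at the level of the corollary itself, my whole proof would consist in citing both lemmas and substituting. I would write one line invoking Lemma~2 for the left equality, then Lemma~1 for the right equality.

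Conceptually, the first lemma partitions $F_{n-1}$ into fibres $F_{n-1}^i$ indexed by the number of left edges $i\in\{0,\dots,n-1\}$, recognises that a non-empty fibre requires $(i+1)\mid n$ via the identity $b_\pi = n/(a_\pi+1)-1$, and counts the arrangements of $i$ left edges in a path of length $n/(i+1)+i-1$ by a binomial coefficient matching $T_n^{i+1}$. The second lemma identifies $M_{t_n}$ with $F_{n-1}$: one direction uses Remark~\ref{remark1}, which guarantees that every minimal-cost leaf of $t_n$ has cost exactly $n-1$, hence $M_{t_n}\subseteq F_{n-1}$; the reverse direction is shown by induction on $n$, namely that after $SplitAll$ has been applied $n$ times every path of cost $n-1$ has surfaced as a leaf.

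The only delicate point, already settled inside the proof of the second lemma but worth flagging as the real obstacle, is the induction step showing that $t_{n+1}=SplitAll(t_n)$ contains every path of cost $n$. The argument runs by contradiction: a hypothetical missing path $\pi$ has a father $f_\pi$ of cost strictly less than $c_\pi$ (one cost formula when $\pi$ is a left child, one when it is a right child), so by the induction hypothesis and Remark~\ref{remark1} the father $f_\pi$ was either split in an earlier stage or is a leaf of $M_{t_n}$ and hence is split by the current application of $SplitAll$. In either case $\pi$ must be present in $t_{n+1}$, a contradiction. Once this bookkeeping is in place, the corollary becomes tautological.
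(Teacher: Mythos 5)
Your proof is correct and matches the paper exactly: the corollary is obtained by chaining the two preceding lemmas ($|F_{(n-1)}|=S_n$ and $|F_{(n-1)}|=|M_{t_n}|$), and your summary of how each lemma is established mirrors the paper's own arguments.
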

From Corollary~\ref{cor1}, we have $\displaystyle S_n=\sum_{k|n}{}\binom{ \frac{n}{k}+k-2}{k-1}$.
\begin{proposition}\label{wsdefn}
Let $t_n$ be a $P$-tree. Its size $s(t_n)=|L_{t_n}|$ (the number of leaves), and its weight $w(t_n)$ are: $\displaystyle s(t_n)=1+\sum_{i=1}^{n-1}S_{i}$ and $\displaystyle w(t_n) =\sum_{i=2}^{n}(i-2)S_{i-1}$.
\end{proposition}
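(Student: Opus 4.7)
The plan is to prove both formulas by induction on $n$, using the inductive construction $t_{n+1}=SplitAll(t_n)$ together with Corollary~\ref{cor1} (which gives $|M_{t_n}|=S_n$) and Remark~\ref{remark1} (which says every $\nu\in M_{t_n}$ has cost exactly $n-1$).

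First I would track the leaf count. The base case is immediate: $t_1$ is the $1$-tree, so $s(t_1)=1$. For the induction, note that applying $SplitAll$ to $t_n$ removes each leaf of $M_{t_n}$ and replaces it with the unique $2$-tree, thereby creating two new leaves in place of one. Hence $s(t_{n+1})=s(t_n)+|M_{t_n}|=s(t_n)+S_n$. Unrolling gives $s(t_n)=1+\sum_{i=1}^{n-1}S_i$, as claimed.

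Next I would track the weight, reusing the local weight-change formula already established in the proof that $Split(t)$ produces a $Z$-tree: splitting a leaf $\mu$ increases the tree weight by exactly $c_\mu$ (because $-a_\mu b_\mu+(a_\mu+1)b_\mu+a_\mu(b_\mu+1)=a_\mu+b_\mu+a_\mu b_\mu=c_\mu$). Since $SplitAll$ splits each of the $S_n$ leaves of $M_{t_n}$, and by Remark~\ref{remark1} each such leaf has cost $n-1$, we obtain
\begin{equation*}
w(t_{n+1})=w(t_n)+(n-1)\,S_n.
\end{equation*}
The base case $w(t_1)=0$ holds because the unique leaf of $t_1$ has $a_\nu=b_\nu=0$. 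Unrolling the recurrence yields $w(t_n)=\sum_{i=1}^{n-1}(i-1)\,S_i$, and the change of index $j=i+1$ rewrites this as $w(t_n)=\sum_{j=2}^{n}(j-2)\,S_{j-1}$, matching the claim.

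There is no serious obstacle here; the only subtle point is making sure that applying $SplitAll$ to all leaves of $M_{t_n}$ simultaneously produces the same cumulative weight change as splitting them one at a time, which is clear because the weight contribution $w_\mu=a_\mu b_\mu$ of each leaf outside $M_{t_n}$ is unaffected by splits at other leaves. Everything else is a direct arithmetic recursion anchored by Corollary~\ref{cor1} and Remark~\ref{remark1}.
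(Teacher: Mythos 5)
Your proof is correct and follows essentially the same route as the paper: both rest on the recurrences $s(t_{n+1})=s(t_n)+|M_{t_n}|$ and $w(t_{n+1})=w(t_n)+(\text{cost of a minimal leaf})\cdot|M_{t_n}|$, anchored by Corollary~\ref{cor1} and Remark~\ref{remark1}, followed by unrolling and a reindexing. You merely supply a little more detail than the paper does (the explicit weight-change computation $c_\mu$, the base cases, and the observation that simultaneous splits accumulate the same total as sequential ones), all of which is consistent with the paper's argument.
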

\begin{proof}
From Corollary~\ref{cor1} and Remark~\ref{remark1}, we have, the size of the tree $t_n$ is the sum of the size of the tree $t_{n-1}$ and the number of all split minimal leaves, that is, $s(t_n)=s(t_{n-1})+|M_{t_{n-1}}|$. We have also, the weight of the tree $t_n$ is the sum of the weight of the tree $t_{n-1}$ and the costs of all split minimal leaves, that is, $w(t_n)=w(t_{n-1})+(n-2)|M_{t_{n-1}}|$.
\end{proof}
From Proposition~\ref{propo4}, the $P$-tree $t_n$ (which is a $Z$-tree) corresponds to a $Z(\A_{s(t_n)})$ partition system. The CFS automaton associated with $Z(\A_{s(t_n)})$ has $w(t_n)$ transitions.

With our construction the CFS automaton associated with a $Z(\A_n)$ CFS partition system may contain several initial states. However, in order to compare the number of minimal automata and their number of transitions, with those obtained by R. Cox \cite{Cox} (seen Table~\ref{tab:t1}), we must restrict our study to CFS partition systems leading to a unique initial state automata. It is easy to verify that in this case a $P$-tree $t_n$ has $\displaystyle s(t_n) =1+\sum_{i=1}^{n-1} \frac{S_{i+2}}{2}$ and $\displaystyle w(t_n) =\sum_{i=1}^{n} \frac{iS_{i+1}}{2}$.
\begin{center}
\begin{table}[H]
\begin{tabularx}{\linewidth}{|l|X|X|X|X|X|X|X|X|X|X|X|X|X|X|}\hline
\textbf{n}&	$1$&	$2$&	$3$&	$4$&	$5$&	$6$&	$7$&	$8$&	$9$&	$10$&	$11$&	$12$&	$13$&	$14$\\\hline
(i)&	$1$&	$3$&	$6$&	$9$&	$13$&	$18$&	$23$&	$\leq 28$&	$\leq 34$&	$\leq 41$&	$?$&	$?$&	$?$&	 $?$\\\hline
(ii)&	$1$&	$1$&	$2$&	$1$&	$1$&	$4$&	$6$&	$\geq 1$&	$\geq 1$&	$\geq 1$&	$?$&	$?$&	$?$&	 $?$\\\hline
(iii)&	$1$&	$3$&	$6$&	$9$&	$13$&	$18$&	$23$&	$28$&	$33$&	$39$&	$46$&	$53$&	$60$&	$67$\\\hline
(iv)&	$1$&	$1$&	$2$&	$1$&	$1$&	$4$&	$6$&	$4$&	$1$&	$1$&	$5$&	$10$&	$10$&	$5$\\\hline
\end{tabularx}
\caption[LiALLAHalhamd]
{Comparison table.
\begin{inparaenum}[(i)]
\item Minimal transition number estimated by R. Cox \cite{Cox}
\item Number of minimal automata estimated by R. Cox \cite{Cox}
\item Number of transitions in our reduced automaton
\item Number of reduced automata estimated by our approach.
\end{inparaenum}}
\label{tab:t1}
\end{table}
\end{center}
\section{Asymptote behavior of the number of transitions}\label{Asymptote}
Appendix~\ref{AsymptoticNotations} contains the basics of asymptotic notations. In this section we shall establish one of our main result which concern the asymptotically result on the behavior of the number of transitions $w(t_n)$, where $s(t_n)$ is the number of states. Namely, we will show that the weight of our automata is asymptotically equivalent to $s(t_n)log^2s(t_n)$ up to constant which means that the number of transitions is minimal in the sense that we reach the lower bounded of Shnitger \cite{Schnitger06}. Indeed, we have
\begin{theorem}\label{mainth1}$\displaystyle \log^2(4)~w(t_n) \sim s(t_n)\log^2 s(t_n).$
\end{theorem}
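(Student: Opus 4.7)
The strategy is two-pronged: first, a partial-summation identity that reduces the problem to the asymptotic behaviour of $s(t_n)$; second, a saddle-point analysis for $s(t_n)$ itself.

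By Proposition~\ref{wsdefn}, $s(t_n) - 1 = \sum_{i=1}^{n-1} S_i$ and $w(t_n) = \sum_{j=1}^{n-1}(j-1)\,S_j$. Setting $A_j = \sum_{i\le j} S_i = s(t_{j+1})-1$ and applying Abel summation gives
\[
w(t_n) \;=\; (n-2)\, s(t_n) \;-\; \sum_{k=2}^{n-1} s(t_k).
\]
Thus it will suffice to prove (a) $\log s(t_n) \sim \sqrt{n}\,\log 4$, and (b) $\sum_{k=2}^{n-1} s(t_k) = o\!\bigl(n\,s(t_n)\bigr)$; once these are in hand, $w(t_n) \sim n\,s(t_n)$, and combining with (a), $s(t_n)\log^2 s(t_n) \sim n\log^2(4)\,s(t_n) \sim \log^2(4)\,w(t_n)$.

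For (a), use Corollary~\ref{cor1}: $S_i = \sum_{k\mid i}\binom{i/k + k - 2}{k-1}$. Writing $i=kj$, exchanging the order of summation, and invoking the hockey-stick identity $\sum_{j=1}^{J}\binom{j+k-2}{k-1} = \binom{J+k-1}{k}$ yields
\[
s(t_n) - 1 \;=\; \sum_{k\ge 1}\binom{\lfloor (n-1)/k\rfloor + k - 1}{k},
\]
which smooths out the arithmetical irregularity of the divisor sum defining $S_i$. Apply Laplace's method with $k=\alpha\sqrt{n}$: Stirling's formula gives
\[
\log\binom{\lfloor (n-1)/k\rfloor + k - 1}{k} \;=\; \sqrt{n}\,g(\alpha) + O(\log n), \quad g(\alpha) = \tfrac{1}{\alpha}\bigl[(\alpha^2+1)\log(\alpha^2+1) - \alpha^2\log \alpha^2\bigr].
\]
A direct computation shows $g(\alpha) = g(1/\alpha)$ and $g(\alpha)\to 0$ both as $\alpha\to 0^+$ and as $\alpha\to\infty$, so $g$ attains its unique maximum on $(0,\infty)$ at $\alpha=1$, with $g(1) = 2\log 2 = \log 4$. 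The maximum term is therefore of order $4^{\sqrt{n}}$ up to polynomial factors, the Gaussian contribution around $\alpha=1$ adds only polynomial factors, and terms with $|\alpha-1|$ large are exponentially negligible, yielding $\log s(t_n) \sim \sqrt{n}\log 4$ and hence $\log^2 s(t_n) \sim n\log^2 4$.

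For (b), by (a) we have $s(t_{n-j})/s(t_n) \le 4^{-j/(2\sqrt{n})(1+o(1))}$ (using $\sqrt{n-j}-\sqrt{n} \le -j/(2\sqrt{n})$), so summing the resulting geometric series gives $\sum_{k=2}^{n-1} s(t_k) = O\!\bigl(\sqrt{n}\,s(t_n)/\log 4\bigr) = o\!\bigl(n\,s(t_n)\bigr)$, as required. The main obstacle is the uniform control in the Laplace analysis for (a): one must handle the floor $\lfloor(n-1)/k\rfloor$ and show that the tails at $|\alpha-1|$ large contribute negligibly. A clean way to avoid a full Gaussian expansion is to bracket the sum by the entropy upper bound $\log\binom{N}{k}\le N\,H(k/N)$ (whose leading exponent coincides with $\sqrt{n}\,g(\alpha)$) and the one-term lower bound at $k=\lfloor\sqrt{n-1}\rfloor$; since both give leading exponent $\sqrt{n}\log 4$, the asymptotic $\log s(t_n)\sim\sqrt{n}\log 4$ follows with only elementary bookkeeping.
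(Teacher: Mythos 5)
Your overall strategy is sound and the skeleton is the same as the paper's: an Abel/partial summation identity reducing $w(t_n)$ to $(n-2)s(t_n)$ minus an error term, followed by the asymptotic $\log s(t_n)\sim\sqrt{n}\log 4$. But the way you obtain that asymptotic is genuinely different. The paper proves a two-sided bound $x^{3/4}4^{\sqrt{x}-1}/\sqrt{\pi}\le s(t_x)\le x^{3/4}\log(x)\,4^{\sqrt{x}-1}/\sqrt{\pi}$ (its Theorem~\ref{Catalan}) by comparing every term of the double sum $\sum_{dq\le x}\binom{q+d-2}{d-1}$ to a single central binomial coefficient and invoking Stirling, and it separately uses Chebyshev's weak Prime Number Theorem to lower-bound $\widetilde{s}(t_u)$ and control the integral $\int_2^{n-2}\widetilde{s}(t_u)\,du$. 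You instead resum via the hockey-stick identity to $s(t_n)-1=\sum_k\binom{\lfloor (n-1)/k\rfloor+k-1}{k}$ and run a saddle-point analysis with $k=\alpha\sqrt{n}$; this locates the dominant contribution at $d\approx q\approx\sqrt{n}$ explicitly (the paper's write-up is rather cavalier about exactly this point, since the majorant $\binom{2(\lfloor x\rfloor-1)}{\lfloor x\rfloor-1}$ it displays is of order $4^{x}$, not $4^{\sqrt{x}}$), and it lets you dispense with the prime-counting detour entirely. Two small cautions in part (a): the inference ``$g$ is symmetric under $\alpha\mapsto 1/\alpha$ and vanishes at both ends, hence its unique maximum is at $\alpha=1$'' is not valid as stated (a symmetric function can peak at a symmetric pair of interior points); you should check the sign of $g'$, which does confirm that $\alpha=1$ is the unique maximizer with $g(1)=\log 4$.

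The one genuine gap is in step (b). The bound $s(t_{n-j})/s(t_n)\le 4^{-j(1+o(1))/(2\sqrt{n})}$ does \emph{not} follow from (a): the asymptotic $\log s(t_m)=\sqrt{m}\log 4+o(\sqrt{m})$ carries an error of size $o(\sqrt{n})$, which swamps the main term $-j\log(4)/(2\sqrt{n})$ for every $j=o(n)$, so no pointwise ratio control is available from the log-asymptotics alone. To repair this you need two-sided bounds with only polynomial slack, i.e. $c\,n^{a}4^{\sqrt{n}}\le s(t_n)\le C\,n^{b}4^{\sqrt{n}}$ with $b-a<1/2$ --- exactly what the paper's Theorem~\ref{Catalan} supplies, and what your own Laplace analysis yields if you track the polynomial factors you allude to. With such bounds, $\sum_{k=2}^{n-1}s(t_k)\le C n^{b}\sum_{k\le n}4^{\sqrt{k}}=O\bigl(n^{b+1/2}4^{\sqrt{n}}\bigr)=o\bigl(n\,s(t_n)\bigr)$ (the paper evaluates $\int_2^{n-2}4^{\sqrt{x}}dx$ explicitly for the same purpose), and the rest of your argument goes through. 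So the proposal is correct in outline and in most details, but (b) must be rerouted through the quantitative form of (a) rather than its $\sim$-form.
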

 As a consequence we obtain the following
\begin{corollary}For a large $n$, we have
\[
\omega(t_n) < s(t_n)\log^2 s(t_n).
\]
\end{corollary}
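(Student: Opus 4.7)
The strategy is to read the corollary as an immediate quantitative consequence of the asymptotic equivalence in Theorem~\ref{mainth1}. Writing $f(n)\sim g(n)$ in the usual sense that $f(n)/g(n)\to 1$, I would first restate Theorem~\ref{mainth1} as the limit
\begin{equation*}
\lim_{n\to\infty}\frac{\log^2(4)\,w(t_n)}{s(t_n)\log^2 s(t_n)} \;=\; 1,
\end{equation*}
and then divide through by $\log^2(4)$ to obtain
\begin{equation*}
\lim_{n\to\infty}\frac{w(t_n)}{s(t_n)\log^2 s(t_n)} \;=\; \frac{1}{\log^2(4)}.
\end{equation*}

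The next step is to observe that, in the logarithm base implicit throughout the paper (base $2$, consistent with Schnitger's bound $\Omega(n\log^2 n)$ and the Hromkovi\v{c} framework), we have $\log^2(4)=4>1$; the same inequality $\log^2(4)>1$ also holds under the natural logarithm. Consequently the limit $1/\log^2(4)$ is a constant strictly less than $1$, say $\alpha<1$.

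Finally, I would apply the definition of limit with any $\varepsilon<1-\alpha$: there exists $N$ such that for all $n\geq N$,
\begin{equation*}
\frac{w(t_n)}{s(t_n)\log^2 s(t_n)} \;<\; \alpha+\varepsilon \;<\; 1,
\end{equation*}
which rearranges to $w(t_n)<s(t_n)\log^2 s(t_n)$ for all sufficiently large $n$, as claimed. There is no real obstacle here; the only thing worth being careful about is to confirm the logarithm base so that the constant $1/\log^2(4)$ is indeed strictly below $1$, and to note that the corollary is genuinely weaker than Theorem~\ref{mainth1} (it extracts only the sharpness of the constant, not the equivalence itself).
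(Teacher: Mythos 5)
Your proposal is correct and is exactly the argument the paper intends: the final line of the proof of Theorem~\ref{mainth1} already records that $\omega(t_n)/\bigl(s(t_n)\log^2 s(t_n)\bigr)\to 1/\log^2(4)<1$, and the corollary follows by the definition of the limit just as you describe. Your side remark about checking the logarithm base is a sensible precaution, but as you note the constant $1/\log^2(4)$ is strictly below $1$ in any base $\geq 2$ and for the natural logarithm, so nothing further is needed.
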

It is also easy to deduce the following
\begin{corollary}\label{mainth2}$\displaystyle \frac{s(t_n)\log^2 s(t_n)}{\log\log~s(t_n)}=o\big(w(t_n)\big).$
\end{corollary}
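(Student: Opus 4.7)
The plan is to deduce this corollary directly from Theorem~\ref{mainth1}, since the latter already pins down the asymptotic size of $w(t_n)$ up to the multiplicative constant $\log^2(4)$, and the extra factor $\log\log s(t_n)$ in the denominator of the corollary is what does all the work.

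First I would verify that $s(t_n)\to\infty$ as $n\to\infty$, so that $\log\log s(t_n)$ is well-defined for large $n$ and tends to infinity. This follows immediately from Proposition~\ref{wsdefn}: $s(t_n) = 1 + \sum_{i=1}^{n-1} S_i$, and since each $S_i \geq 1$ (the $i$-th row of the Eratosthenes--Pascal triangle contains at least the entry $T_i^1 = 1$), the sequence $s(t_n)$ is strictly increasing and in fact $s(t_n) \geq n$, so $\log\log s(t_n)\to\infty$.

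Next I would rewrite the quantity of interest using Theorem~\ref{mainth1}. Since $\log^2(4)\,w(t_n)\sim s(t_n)\log^2 s(t_n)$, we have
\[
\frac{s(t_n)\log^2 s(t_n)}{\log\log s(t_n)\cdot w(t_n)}
\;\sim\;\frac{\log^2(4)}{\log\log s(t_n)}.
\]
The right-hand side tends to $0$ by the first step, which is exactly the definition of
\[
\frac{s(t_n)\log^2 s(t_n)}{\log\log s(t_n)} \;=\; o\bigl(w(t_n)\bigr).
\]

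The argument is purely formal manipulation of little-$o$ against the stated asymptotic equivalence, so there is no real obstacle beyond confirming $s(t_n)\to\infty$; no new combinatorial input is needed. The content of the corollary is simply that Schnitger's lower bound $\Omega\!\bigl(n\log^2 n/\log\log n\bigr)$ is strictly weaker than the tight $\Theta(n\log^2 n)$ behavior exhibited by our construction.
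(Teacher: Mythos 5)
Your argument is correct and is exactly the deduction the paper intends (the paper gives no explicit proof, merely asserting the corollary is ``easy to deduce'' from Theorem~\ref{mainth1}): divide the equivalence $\log^2(4)\,w(t_n)\sim s(t_n)\log^2 s(t_n)$ by $\log\log s(t_n)\to\infty$, having first checked $s(t_n)\geq n\to\infty$ via $S_i\geq 1$. Nothing is missing.
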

Before starting the proof of Theorem \ref{mainth1}. Observe that according to the Corollary \ref{Sdefn} combined with
Proposition \ref{wsdefn}, one may consider that  $\omega(t_n)$ and
$s(t_n)$ are given by
$$ s(t_n)=1+\sum_{1}^{n-1}S_i {\rm {~~and~~}} \omega(t_n)= \sum_{i=1}^{n-2}i S_{i+1}.$$
As usual in the number theory, any arithmetical function $f~:~\Z \longrightarrow \R$ can be extended to the real line
by putting, for any $ x\in \R$, $f(x)=f(\lfloor x \rfloor)$. Therefore, for any $x \geq 2,$ we have
$$ s(t_x)=1+\sum_{1}^{\lfloor x\rfloor-1}S_i {\rm {~~and~~}} \omega(t_x)= \sum_{i=1}^{\lfloor x \rfloor-2}i S_{i+1}.$$
We recall that the classical arithmetical function $\pi(x)$ denote the number of primes not exceeding $x$.
We shall need also the following classical identity due to Abel
\begin{theorem}[Abel's identity, \cite{Apostol}]For any arithmetical function $a(n)$ let
 $$ A(x)= \sum_{n \leq x} a(n)$$
where $A(x)=0$ if $x <1$. Assume $f$ has a continuous derivative on the interval $[y,x]$, where
$0<y<x$. Then we have
\begin{eqnarray}\label{Abel}
 \sum_{y < n \leq x} a(n) f(n)= A(x)f(x)-A(y)f(y)-\bigintss_{y}^{x}A(u) f'(u) du..
\end{eqnarray}
\end{theorem}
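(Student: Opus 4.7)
The plan is to prove Abel's identity via summation by parts combined with the fundamental theorem of calculus. Since $f$ has continuous derivative on $[y,x]$, for each integer $n\in(y,x]$ one may write $f(n)=f(x)-\int_n^x f'(u)\,du$. This converts the discrete weighting on the left-hand side into an integral against $f'$, which can then be swapped with the outer summation to expose an $A(u)$ factor.

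First, I would substitute this expression for $f(n)$ and split the sum:
\begin{equation*}
\sum_{y<n\leq x} a(n)f(n) \;=\; f(x)\sum_{y<n\leq x}a(n) \;-\; \sum_{y<n\leq x} a(n)\int_n^x f'(u)\,du.
\end{equation*}
The first piece collapses to $f(x)\bigl(A(x)-A(y)\bigr)$ by the definition of $A$. For the second piece I would rewrite $\int_n^x f'(u)\,du=\int_y^x f'(u)\,\mathbf{1}_{\{n\leq u\}}\,du$ and interchange sum and integral using Fubini (which is trivial here: the outer sum is finite and $f'$ is continuous). After fixing $u\in[y,x]$, the surviving indices are exactly those integers $n\in(y,u]$, so the double term becomes
\begin{equation*}
\sum_{y<n\leq x} a(n)\int_n^x f'(u)\,du \;=\; \int_y^x f'(u)\bigl(A(u)-A(y)\bigr)du.
\end{equation*}
Splitting this last integral and using $\int_y^x f'(u)\,du=f(x)-f(y)$ converts it into $\int_y^x A(u)f'(u)\,du-A(y)\bigl(f(x)-f(y)\bigr)$.

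Assembling the two pieces yields
\begin{equation*}
\sum_{y<n\leq x} a(n)f(n) \;=\; f(x)\bigl(A(x)-A(y)\bigr)-\int_y^x A(u)f'(u)\,du+A(y)\bigl(f(x)-f(y)\bigr),
\end{equation*}
which simplifies directly to $A(x)f(x)-A(y)f(y)-\int_y^x A(u)f'(u)\,du$, as claimed. The argument is essentially classical summation-by-parts adapted to the arithmetic setting, so there is no substantive obstacle; the only care needed is the bookkeeping of the boundary contribution at $y$, which is precisely what produces the $-A(y)f(y)$ term on the right-hand side.
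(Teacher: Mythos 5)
Your proof is correct. Note that the paper itself offers no proof of this statement---it is quoted from Apostol with a citation---so there is nothing internal to compare against; but your argument is a complete and valid derivation, and every step checks out (the interchange of the finite sum with the integral is indeed harmless, and for fixed $u\in[y,x]$ the surviving indices are exactly $y<n\leq u$, giving $A(u)-A(y)$). For the record, the route you take, substituting $f(n)=f(x)-\int_n^x f'(u)\,du$ and pushing the sum inside the integral, differs slightly from Apostol's own proof, which writes $a(n)=A(n)-A(n-1)$, performs a discrete summation by parts, and then uses the fact that $A$ is a step function constant on each interval $[n,n+1)$ to convert the telescoping differences $A(n)\bigl(f(n+1)-f(n)\bigr)$ into $\int_n^{n+1}A(u)f'(u)\,du$, with separate corrections at the non-integer endpoints $y$ and $x$. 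The two arguments are of equal weight; yours has the minor advantage of absorbing the endpoint bookkeeping in one stroke, and your accounting of the boundary terms is right: the $-A(y)f(x)$ from the first piece cancels against the $+A(y)f(x)$ produced by $\int_y^x f'(u)\,du=f(x)-f(y)$, leaving exactly the $-A(y)f(y)$ required on the right-hand side.
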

We deduce easily from the Abel's identity the following lemma
\begin{lemma}\label{Abellemma} For any integer $n \geq 1$, we have
$$\omega(t_n)=(n-2)\widetilde{s}(t_n)-3-\bigintss_{2}^{n-2}\widetilde{s}(t_u) du,$$
where $\widetilde{s}(t_u)=s(t_u)-1.$
\end{lemma}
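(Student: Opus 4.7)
The plan is to apply Abel's identity (just displayed above the lemma) directly to $\omega(t_n)=\sum_{i=1}^{n-2} i\,S_{i+1}$, choosing the arithmetical function $a$ and the smooth function $f$ so that the Abel integral appears in the form $\int \widetilde{s}(t_u)\,du$ demanded by the statement. The natural choice is $a(i)=S_{i+1}$ and $f(u)=u$: the sum $\sum_{i\le n-2}a(i)f(i)$ then reproduces $\omega(t_n)$ exactly, and since $f'(u)\equiv 1$ the Abel integral reduces to $\int A(u)\,du$, which is already the shape we want. An equivalent alternative, related by the shift $j=i+1$, is $a(j)=S_j$ and $f(u)=u-1$, for which $\omega(t_n)=\sum_{j=1}^{n-1}(j-1)S_j$.

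Taking $y\in(0,1)$ so that $A(y)=0$ (the sum defining $A$ is empty below $1$) and $x=n-2$, Abel's identity yields
$$\omega(t_n)=(n-2)\,A(n-2)\;-\;\int_{y}^{n-2}A(u)\,du.$$
The boundary term is then rewritten using $A(n-2)=\sum_{i=1}^{n-2}S_{i+1}=\widetilde{s}(t_n)-S_1$, which produces the leading term $(n-2)\widetilde{s}(t_n)$ together with an explicit linear correction. For the integrand, the key identification is that, under the floor extension $f(x)=f(\lfloor x\rfloor)$ adopted in the paper, $A(u)$ coincides on each interval of constancy with a translate of $\widetilde{s}(t_{\cdot})$ up to an additive constant; a linear change of variable $v=u+c$ (for a small integer shift $c$) then converts $\int_y^{n-2}A(u)\,du$ into $\int \widetilde{s}(t_v)\,dv$ on a translated interval, plus an affine correction coming from $S_1$.

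The last step is to normalize the integration window to $[2,n-2]$. The lower endpoint descends to $2$ at no cost because $\widetilde{s}(t_v)\equiv 0$ on $[1,2)$, and any excess at the upper endpoint evaluates to a finite sum of boundary values $\widetilde{s}(t_k)$ at integer points $k$ near $n$. Collecting this upper-endpoint excess with the linear correction $-(n-2)S_1$ from the boundary term and the affine correction from the change of variable should telescope to the constant $-3$ promised by the statement. The main obstacle is precisely this bookkeeping: each of $A$, $\widetilde{s}$, and the integration limits is defined modulo a floor, so tracking the correct constants through the substitution is delicate. The Abel step itself and the $A\leftrightarrow\widetilde{s}$ identification are routine definition-chases; a numerical check at $n=3$ or $n=4$ is the quickest way to pin down any remaining off-by-one before writing the argument in its final form.
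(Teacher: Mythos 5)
Your choice of $a(i)=S_{i+1}$, $f(u)=u$ in Abel's identity is exactly the intended route (the paper itself offers no proof beyond the remark that the lemma is ``deduced easily'' from Abel's identity), and your first two steps are sound: with $y\in(0,1)$ one gets $\omega(t_n)=(n-2)A(n-2)-\int_y^{n-2}A(u)\,du$ and $A(n-2)=\widetilde{s}(t_n)-S_1=\widetilde{s}(t_n)-1$. The step that would fail is the last one, where you assert that the corrections ``should telescope to the constant $-3$''. They do not. Carrying the bookkeeping through: $A(u)$ is the step function with $A(m)=\widetilde{s}(t_{m+2})-1$ on $[m,m+1)$, so
$$\int_y^{n-2}A(u)\,du=\sum_{m=1}^{n-3}\bigl(\widetilde{s}(t_{m+2})-1\bigr)=\sum_{k=3}^{n-1}\widetilde{s}(t_k)-(n-3),\qquad \int_2^{n-2}\widetilde{s}(t_u)\,du=\sum_{k=2}^{n-3}\widetilde{s}(t_k),$$
and the mismatch between the two sums is $\widetilde{s}(t_{n-1})+\widetilde{s}(t_{n-2})-\widetilde{s}(t_2)$, which grows with $n$ rather than reducing to a constant. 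What actually comes out is
$$\omega(t_n)=(n-2)\widetilde{s}(t_n)-\widetilde{s}(t_{n-1})-\widetilde{s}(t_{n-2})-\int_2^{n-2}\widetilde{s}(t_u)\,du,$$
or, more cleanly, $\omega(t_n)=(n-2)\widetilde{s}(t_n)-\int_2^{n}\widetilde{s}(t_u)\,du$ with no additive constant at all.

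In other words, the obstacle is not your method but the target: no correct bookkeeping reaches the printed right-hand side, because the boundary excess is $\widetilde{s}(t_{n-1})+\widetilde{s}(t_{n-2})$, not $3$. The numerical check you wisely propose confirms this immediately: for $n=4$ one has $\omega(t_4)=S_2+2S_3=6$ and $\widetilde{s}(t_4)=5$, while the stated formula gives $2\cdot 5-3-0=7$; the corrected identity gives $2\cdot 5-\widetilde{s}(t_3)-\widetilde{s}(t_2)=10-3-1=6$. You should therefore prove the corrected identity rather than the printed one. (The downstream use is unharmed: the corrected form still gives $\omega(t_n)\le(n-2)\widetilde{s}(t_n)$, and the extra terms are at most $2\,\widetilde{s}(t_n)$, so they vanish after division by $(n-2)\widetilde{s}(t_n)$ in the propositions that follow.)
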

We need to estimate $\omega(t_n)$ with respect to $s(t_n)$. For that, we shall need the following weaker form
of the Prime Number Theorem (WPNT for short) due to Cheyshev
\begin{theorem}[\cite{Apostol}]\label{PNTChe} For every integer $n \geq 2$ we have
\begin{eqnarray}
 \frac16 \frac{n}{\log(n)} <\pi(n) <6 \frac{n}{\log(n)}.
\end{eqnarray}

\end{theorem}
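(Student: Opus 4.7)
The plan is to prove both bounds by a careful analysis of the central binomial coefficient $C_n=\binom{2n}{n}$, following the classical Chebyshev/Erd\H{o}s approach. The key algebraic ingredient is Legendre's formula for the $p$-adic valuation: for every prime $p$,
$$v_p(C_n)=\sum_{k\geq 1}\bigl(\lfloor 2n/p^k\rfloor-2\lfloor n/p^k\rfloor\bigr),$$
where every summand is $0$ or $1$ and the sum is effectively truncated at $k=\lfloor \log(2n)/\log p\rfloor$. This immediately yields the uniform bound $p^{v_p(C_n)}\leq 2n$, which is the single estimate driving everything that follows.

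For the lower bound on $\pi(n)$, I would combine the easy inequality $C_n\geq 4^n/(2n+1)$ (obtained from $4^{n}=(1+1)^{2n}=\sum_{k=0}^{2n}\binom{2n}{k}\leq (2n+1)C_n$) with the inequality $C_n\leq \prod_{p\leq 2n}p^{v_p(C_n)}\leq (2n)^{\pi(2n)}$. Taking logarithms gives
$$\pi(2n)\geq \frac{2n\log 2-\log(2n+1)}{\log(2n)},$$
from which $\pi(2n)\geq \frac{1}{6}\cdot \frac{2n}{\log(2n)}$ follows for $n$ sufficiently large. Translating the bound from even arguments $2n$ to arbitrary $n\geq 2$ uses the monotonicity $\pi(m+1)\geq \pi(m)$, and the remaining small cases can be checked by direct computation to nail the constant $\frac{1}{6}$.

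For the upper bound, the key is that every prime $p$ with $n<p\leq 2n$ divides $C_n$ (with $v_p(C_n)=1$), so
$$\prod_{n<p\leq 2n}p\ \;\Big|\;\ C_n\leq 4^{n}.$$
Since each such prime exceeds $n$, this gives $n^{\pi(2n)-\pi(n)}\leq 4^{n}$, i.e.\
$$\pi(2n)-\pi(n)\leq \frac{2n\log 2}{\log n}.$$
Iterating this inequality along the dyadic sequence $n,\ n/2,\ n/4,\ldots$ and telescoping produces a geometric sum whose value is of order $n/\log n$; a careful constant-tracking step then delivers $\pi(n)<6\,n/\log n$ for $n\geq 2$, with the finitely many small values verified by inspection.

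The main obstacle will be \emph{quantitative} rather than structural: the skeleton above gives bounds of the form $c_1\cdot n/\log n\leq \pi(n)\leq c_2\cdot n/\log n$ almost for free, but extracting the explicit constants $\frac{1}{6}$ and $6$ requires careful bookkeeping in the telescoping argument for the upper bound and an explicit verification of the threshold at which the asymptotic inequality takes over from a base case. Both issues are routine once the dyadic iteration is set up cleanly, and no further ideas beyond Legendre's formula and the trivial bound $C_n\leq 4^n$ are needed.
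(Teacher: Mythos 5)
The paper does not prove this statement at all: it is quoted as Chebyshev's classical weak form of the Prime Number Theorem directly from Apostol's book, so there is no in-paper argument to measure your proposal against. On its own terms, your sketch is the standard Erd\H{o}s--Chebyshev argument via the central binomial coefficient (essentially the proof found in the cited reference), and its skeleton is correct: Legendre's formula gives $p^{v_p(C_n)}\leq 2n$, which combined with $4^n/(2n+1)\leq C_n\leq (2n)^{\pi(2n)}$ yields the lower bound, while the fact that every prime in $(n,2n]$ divides $C_n$ exactly once yields the upper bound. Two places need more care than your outline admits. First, the dyadic telescoping for the upper bound is not literally a geometric sum: the $j$-th term is $\frac{(n/2^{j+1})\log 4}{\log(n/2^{j+1})}$ and the denominators shrink as $j$ grows, so you must split the range (say at $\sqrt{n}$, where $\log(n/2^{j+1})\geq\frac12\log n$, absorbing the tail into $\pi(\sqrt{n})\leq\sqrt{n}$) to land at $\pi(n)\leq \frac{2n\log 4}{\log n}+O(\sqrt{n})$, which does fit under the constant $6$; you must also account for the fact that $n/2^{j}$ need not be an integer when invoking the inequality $\pi(2m)-\pi(m)\leq 2m\log 2/\log m$. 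Second, transferring the lower bound from even to odd arguments via $\pi(2m+1)\geq\pi(2m)$ is not purely formal, since $x/\log x$ is increasing and the target $\frac16\frac{2m+1}{\log(2m+1)}$ exceeds $\frac16\frac{2m}{\log(2m)}$; the slack $\log 2-\frac16$ is ample, but this should be said. Both issues are routine, and with the finitely many small cases checked by hand the argument closes; as a proof sketch it is sound, though the explicit constants $\frac16$ and $6$ are asserted rather than fully extracted.
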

We deduce from the WPNT the following crucial proposition
\begin{proposition}\label{minorer}
For all $u \geq 4$ we have
\begin{eqnarray}\label{wpnt}
 \widetilde{s}(t_u) \geq \frac13 \frac{u-2}{\log(u-2)}.
\end{eqnarray}
\end{proposition}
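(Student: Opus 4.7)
The plan is to exploit the identity $S_p = 2$ for every prime $p$, which converts the sum defining $\widetilde{s}(t_u)$ into an expression governed by the prime-counting function, after which Chebyshev's weak prime number theorem (Theorem~\ref{PNTChe}) yields the desired estimate essentially immediately.

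First, I would observe that in the formula $T_n^k = \binom{n/k+k-2}{k-1}$ (which vanishes unless $k \in D_n$), the extreme divisors $k=1$ and $k=n$ always contribute $\binom{n-1}{0}=1$ and $\binom{n-1}{n-1}=1$ respectively, so $S_n \geq 2$ for every $n \geq 2$; in particular, when $n=p$ is prime one has $D_p=\{1,p\}$ and $S_p = 2$ exactly. Restricting the sum defining $\widetilde{s}(t_u)$ to prime indices then gives
\[
\widetilde{s}(t_u) \;=\; \sum_{i=1}^{\lfloor u\rfloor-1} S_i \;\geq\; \sum_{\substack{p\leq \lfloor u\rfloor-1\\ p\text{ prime}}}\! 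S_p \;=\; 2\pi(\lfloor u\rfloor-1),
\]
and for $u \geq 4$ we have $\lfloor u\rfloor - 1 \geq 3 \geq 2$, so the lower bound of Theorem~\ref{PNTChe} applies and delivers
\[
\widetilde{s}(t_u) \;>\; \frac{\lfloor u\rfloor-1}{3\log(\lfloor u\rfloor-1)}.
\]

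The final step is to replace $\lfloor u\rfloor - 1$ by $u - 2$. Since $\lfloor u\rfloor \geq u - 1$ we have $\lfloor u\rfloor - 1 \geq u - 2$, and the monotonicity of $x \mapsto x/\log x$ on $(e,\infty)$ yields $(\lfloor u\rfloor-1)/\log(\lfloor u\rfloor-1) \geq (u-2)/\log(u-2)$ as soon as $u - 2 > e$, i.e.\ for $u \geq 5$. The hard part (and really the only subtlety) is the boundary interval $u \in [4,5)$ on which $u-2$ may sit in the decreasing region $(1,e)$ of $x/\log x$, so this monotonicity argument breaks down; I would dispose of it by direct computation: $\widetilde{s}(t_u) = \widetilde{s}(t_4) = S_1 + S_2 + S_3 = 1 + 2 + 2 = 5$ is constant on that interval, whereas $(u-2)/(3\log(u-2)) \leq 2/(3\log 2) < 1$ throughout $[4,5)$, so the claimed inequality holds with enormous slack.
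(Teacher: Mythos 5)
Your proof is correct and follows essentially the same route as the paper's: the observation that $S_p=2$ for primes gives $\widetilde{s}(t_u)\geq 2\pi(\lfloor u\rfloor-1)$, Chebyshev's bound supplies the factor $\frac{1}{3}\cdot\frac{x}{\log x}$, and monotonicity of $x/\log x$ converts $\lfloor u\rfloor-1$ into $u-2$. In fact you are slightly more careful than the paper, which invokes monotonicity of $x/\log x$ without noting that it fails on $(1,e)$ (relevant when $u-2<e$); your direct check on $u\in[4,5)$ closes that small gap.
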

\begin{proof} We can show that for any prime number $p \geq 2$, we have $S_p=2$. Hence
$$\widetilde{s}(t_u)=\sum_{i \leq u}S_i \geq 2 \pi(u-1).$$
Consequently, by WPNT we get
$$\widetilde{s}(t_u) \geq \frac13 \frac{[u]-1}{\log([u]-1)}.$$
But the function $x \in [3, +\infty[ \mapsto \frac{x}{\log(x)}$ is increasing function. It follows that we have,
For all $u \geq 3$,
$$ \widetilde{s}(t_u) \geq \frac13 \frac{u-2}{\log(u-2)},$$
Which achieve the proof of the proposition.
\end{proof}
For any $x \geq 2$ and any positive integer $n$, let
\[
\Li_n(x)=\bigintss_{2}^{x}\frac{dt}{\log^n(t)}.
\]
Let us summarize in the following proposition a classical well-known results on $Li_n(x)$ that we shall used.
\begin{proposition} For every $x \geq 2$  and integer $n \geq 1$, we have
\begin{eqnarray}
\Li_1(x)&=&\frac{x}{\log(x)}+\Li_2(x)-\frac{2}{\log(2)},\\
\Li_n(x)&=&O\big(\frac{x}{\log^n(x)}\big).
\end{eqnarray}
\end{proposition}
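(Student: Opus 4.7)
The plan is to treat the two assertions separately; each is a classical manipulation of the logarithmic integral.

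For the first identity, my plan is a single integration by parts. Setting $u=1/\log(t)$ (so $du=-dt/(t\log^2(t))$) and $dv=dt$ (so $v=t$) in $\Li_1(x)=\int_2^x dt/\log(t)$ gives
$$\Li_1(x)=\Bigl[\frac{t}{\log(t)}\Bigr]_2^x+\int_2^x\frac{dt}{\log^2(t)}=\frac{x}{\log(x)}-\frac{2}{\log(2)}+\Li_2(x),$$
which is exactly the claimed equality.

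For the asymptotic estimate $\Li_n(x)=O(x/\log^n(x))$, I would use the standard splitting of the range of integration at $\sqrt{x}$. On $[2,\sqrt{x}]$ the integrand is bounded above by $1/\log^n(2)$, so this piece contributes at most $\sqrt{x}/\log^n(2)$, which is $o(x/\log^n(x))$ for every fixed $n\geq 1$. On $[\sqrt{x},x]$ we have $\log(t)\geq \tfrac{1}{2}\log(x)$, so the integrand is at most $2^n/\log^n(x)$ and this piece contributes at most $2^n(x-\sqrt{x})/\log^n(x)=O(x/\log^n(x))$. Summing the two bounds yields the stated estimate, with an implicit constant depending only on $n$.

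There is no real obstacle: both assertions are routine and are used without proof in most number-theoretic references. The only point to check is that the implicit constant in the $O$-notation is uniform in $x$, which it is since $1/\log^n(2)$ and $2^n$ are constants once $n$ is fixed, and $\sqrt{x}/\log^n(x)\to 0$ as $x\to\infty$ for every fixed $n$. An alternative route, which mirrors the proof of the first part, is to iterate the integration by parts to obtain the recursion $\Li_n(x)=x/\log^n(x)-2/\log^n(2)+n\,\Li_{n+1}(x)$ and then deduce the $O$-bound by induction on $n$; this is slightly longer but avoids the interval split.
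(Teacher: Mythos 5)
Your proof is correct. The paper itself offers no argument here: it simply records these as ``classical well-known results'' on $\Li_n(x)$ and moves on, so there is nothing to compare against. Your integration by parts for the identity and the split of $[2,x]$ at $\sqrt{x}$ for the bound $\Li_n(x)=O\bigl(x/\log^n(x)\bigr)$ are exactly the standard textbook arguments, and both are carried out correctly.

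One small caveat on your closing aside: the recursion $\Li_n(x)=x/\log^n(x)-2/\log^n(2)+n\,\Li_{n+1}(x)$ is correct, but it expresses $\Li_n$ in terms of $\Li_{n+1}$, so a straightforward induction on $n$ only yields $\Li_{n+1}(x)=O\bigl(x/\log^n(x)\bigr)$, one logarithm short of what is claimed; to close that gap you would still need the interval split (or an equivalent device) at each stage. Since your main argument does not rely on this alternative, the proof as a whole stands.
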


Now, we are able to formulate our key estimation of $\omega(t_n)$ with respect to $s(t_n)$ in the following proposition.
\begin{proposition}\label{limsup}
$\displaystyle \limsup\big(\frac{\omega(t_n)}{(n-2)\widetilde{s}(t_n)} \big)\leq 1.$
\end{proposition}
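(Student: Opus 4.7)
The plan is to read the bound directly off the Abel-type identity of Lemma~\ref{Abellemma} and exploit the manifest non-negativity of $\widetilde{s}$. That lemma rewrites the weight as
\[
\omega(t_n) = (n-2)\,\widetilde{s}(t_n) - 3 - \int_{2}^{n-2}\widetilde{s}(t_u)\,du,
\]
and since $\widetilde{s}(t_u)=s(t_u)-1=\sum_{i=1}^{\lfloor u\rfloor-1}S_i \geq 0$ for every $u\geq 2$, the subtracted quantity $3+\int_{2}^{n-2}\widetilde{s}(t_u)\,du$ is strictly positive.

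Dividing through by $(n-2)\widetilde{s}(t_n)$, which is strictly positive for large $n$ (and in fact grows at least like $\tfrac{1}{3}(n-2)/\log(n-2)$ by Proposition~\ref{minorer}), one immediately gets
\[
\frac{\omega(t_n)}{(n-2)\widetilde{s}(t_n)} = 1 - \frac{3+\int_{2}^{n-2}\widetilde{s}(t_u)\,du}{(n-2)\widetilde{s}(t_n)} \leq 1 - \frac{3}{(n-2)\widetilde{s}(t_n)} < 1
\]
for all sufficiently large $n$. Passing to the $\limsup$ in $n$ then yields $\displaystyle\limsup_n \omega(t_n)/\bigl((n-2)\widetilde{s}(t_n)\bigr)\leq 1$, which is the claim.

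The main obstacle for Proposition~\ref{limsup} has in effect already been cleared upstream: the delicate step is the Abel summation identity packaged in Lemma~\ref{Abellemma}, which turns the discrete sum $\omega(t_n)=\sum_{i=1}^{n-2}iS_{i+1}$ into a continuous integral with a clean sign pattern. Once Lemma~\ref{Abellemma} is granted, Proposition~\ref{limsup} follows by nothing more than the trivial bound $\int_{2}^{n-2}\widetilde{s}(t_u)\,du\geq 0$; neither the Chebyshev-type estimate of Proposition~\ref{minorer} nor the $\Li_n$ asymptotics play a role here. Those ingredients are reserved for the matching lower bound on the same ratio, which, combined with Proposition~\ref{limsup}, will drive the asymptotic equivalence asserted in Theorem~\ref{mainth1}.
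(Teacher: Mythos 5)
Your proof is correct and follows essentially the same route as the paper: both start from the identity of Lemma~\ref{Abellemma}, discard the nonnegative integral term, and conclude that the ratio is at most $1-\tfrac{3}{(n-2)\widetilde{s}(t_n)}<1$. The only (harmless) difference is that you correctly observe Proposition~\ref{minorer} is not actually needed here, whereas the paper invokes it to show the subtracted term tends to $0$ --- a fact that is only required later, for the equivalence in Proposition~\ref{equiva1}.
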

\begin{proof} Applying Lemma \ref{Abellemma} we have, for any $n \geq 3$,
$$ \frac{\omega(t_n)}{(n-2)\widetilde{s}(t_n)} = 1-\frac{3}{(n-2)\widetilde{s}(t_n)}-\frac1{(n-2)\widetilde{s}(t_n)}
\bigintss_{2}^{n-2}\widetilde{s}(t_u) du.$$
Therefore, for any $n \geq 5$, we have
\begin{eqnarray*}
  \frac{\omega(t_n)}{(n-2)\widetilde{s}(t_n)}\leq1-\frac{3}{(n-2)\widetilde{s}(t_n)}
\end{eqnarray*}
From Proposition \ref{minorer} we deduce, that for any $n \geq 5$ we have
$$ \frac{3}{(n-2)\widetilde{s}(t_n)}\leq \frac{1}{\log(n-2)}$$
Hence, by letting $n$ goes to $\infty$  we obtain
$$
\frac3{(n-2)\widetilde{s}(t_n)}  \tendn 0.
$$
Which implies that
$$
\displaystyle \limsup\big(\frac{\omega(t_n)}{(n-2)\widetilde{s}(t_n)} \big)\leq 1,
$$
and this finish the proof of the proposition.
\end{proof}
We are reduced to compare the sequences $(n-2)s(t_n)$ and $s(t_n)\log^2 s(t_n)$. For that
we shall estimate $s(t_x)$. Precisely, we argue that we have the following
\begin{theorem}\label{Catalan}For a large $x>0$ we have
$$x^{\frac34}\frac{4^{\sqrt{x}-1}}{\sqrt{\pi}} \leq s(t_x)\leq x^{\frac34}
\log(x)\frac{4^{\sqrt{x}-1}}{\sqrt{\pi}}.$$
\end{theorem}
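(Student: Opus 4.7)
The proof rests on the closed-form expansion
\[
\widetilde{s}(t_x)\;:=\;s(t_x)-1\;=\;\sum_{i=1}^{x-1}\sum_{k\mid i}\binom{i/k+k-2}{k-1}.
\]
My plan is first to disentangle this double sum by reparametrising $i=km$: the inner sum over $m\in\{1,\dots,\lfloor(x-1)/k\rfloor\}$ telescopes via the hockey-stick identity $\sum_{m=1}^{M}\binom{m+k-2}{k-1}=\binom{M+k-1}{k}$, which yields the more tractable single sum
\[
\widetilde{s}(t_x)\;=\;\sum_{k=1}^{x-1}\binom{\lfloor(x-1)/k\rfloor+k-1}{k}.
\]
The summand is largest when $\lfloor(x-1)/k\rfloor\approx k$, i.e.\ when $k$ is close to $k_0:=\lfloor\sqrt{x-1}\rfloor$, where it approaches the central binomial $\binom{2k_0-1}{k_0}\sim 4^{k_0-1}/\sqrt{\pi k_0}$ by Stirling; this already accounts for the common factor $4^{\sqrt{x}-1}/\sqrt{\pi}$ appearing in both bounds.

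For the lower bound, I would aggregate contributions from a neighbourhood of $k_0$. Taylor-expanding $\log\binom{\lfloor(x-1)/k\rfloor+k-1}{k}$ in $k$ by means of the binary entropy function (evaluated at $p=k/(M+k)$ with $M=\lfloor(x-1)/k\rfloor$) reveals a Gaussian-type decay in $k-k_0$, with a second-order coefficient of size $O(1/\sqrt{x})$. Summing over this Gaussian window and combining it with the $x^{1/4}$ correction from the Stirling normalization at the peak should yield the required prefactor $x^{3/4}$.

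For the upper bound, I would bound each summand by the peak value and control the number of $k$'s that contribute significantly: since the summands decay rapidly outside a polynomial-width window around $k_0$, the total is at worst $O(x^{3/4}\log x)$ times the peak, the extra $\log x$ arising from a crude divisor-sum-type estimate covering all $k=1,\dots,x-1$ (analogous to $\sum_{k\le x}\lfloor x/k\rfloor=O(x\log x)$).

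The hard part will be pinning down the polynomial prefactor $x^{3/4}$ precisely: this requires a delicate interplay between the Gaussian width (contributing $x^{1/4}$), the Stirling correction $x^{-1/4}$ at the peak, and an additional $x$-factor that must come from summing binomials along the ridge where $k\cdot(i/k)\le x$. This last point is most naturally handled through the Catalan asymptotic $C_n\sim 4^n/(n^{3/2}\sqrt{\pi})$ (hence the name of the theorem), which after the substitution $n=\sqrt{x}$ matches both the $1/\sqrt{\pi}$ denominator and the polynomial prefactor simultaneously.
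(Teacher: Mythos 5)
Your reduction of the double sum to $\widetilde{s}(t_x)=\sum_{k}\binom{\lfloor(x-1)/k\rfloor+k-1}{k}$ via the hockey-stick identity is correct and is in fact sharper than what the paper does: the paper keeps the double sum over lattice points $dq\le x$, bounds every summand by the central binomial coefficient of order $\sqrt{x}$, and bounds the number of lattice points by $x\log x$. Your upper bound therefore goes through essentially as in the paper (Stirling at the peak gives $\sim 4^{\sqrt{x}}/(\sqrt{\pi}\,x^{1/4})$, times the number of terms); with the hockey-stick collapse you even have only $\lfloor x\rfloor$ summands, so the $\log x$ is not needed on your route.

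The genuine gap is the lower bound, and you have located it but not closed it. Your own local analysis is right: the summand peaks at $k_0\approx\sqrt{x}$ with value $\asymp 4^{\sqrt{x}}/x^{1/4}$ and decays in $k-k_0$ like a Gaussian of width $\asymp x^{1/4}$ (the quadratic coefficient is $(1-\log 2)/\sqrt{x}>0$). Summing over that window gives $\Theta\bigl(4^{\sqrt{x}}\bigr)$: the $x^{1/4}$ window exactly cancels the $x^{-1/4}$ Stirling factor, and the ``additional $x$-factor'' you hope to find along the ridge is not there — after the collapse there are only $\lfloor x\rfloor$ summands in total, all dominated by the peak and exponentially smaller than it outside the $x^{1/4}$-window, so no reorganization of the same sum can produce a lower bound of order $x^{3/4}4^{\sqrt{x}-1}/\sqrt{\pi}$. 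The Catalan asymptotic $C_n\sim 4^n/(n^{3/2}\sqrt{\pi})$ does not supply the missing factor either; its exponent $3/2$ is unrelated to the prefactor $x^{3/4}$. For comparison, the paper obtains the lower bound in one line by asserting $s(t_x)\ge\lfloor x\rfloor\cdot\binom{2(\lfloor\sqrt{x}\rfloor-1)}{\lfloor\sqrt{x}\rfloor-1}$, i.e.\ that all $\lfloor x\rfloor$ values of $d$ contribute a term of peak size — which is exactly the step your Gaussian computation shows is unavailable. So you should not expect to finish by refining the concentration argument: carried out rigorously, your estimate gives $\widetilde{s}(t_x)=\Theta(4^{\sqrt{x}})$ (up to bounded fluctuations from the floor), which falls short of the stated lower bound by the full polynomial factor rather than being a step toward it.
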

The proof of Theorem \ref{Catalan} will be given later. For instance, using Theorem \ref{Catalan} holds we
shall extended Proposition \ref{limsup} as follows.
\begin{proposition}\label{equiva1}The sequences $\omega(t_n)$ and $(n-2)s(t_n)$ two sequences be equivalent. That is,
$$\frac{\omega(t_n)}{(n-2)s(t_n)}\tendn 1.$$
\end{proposition}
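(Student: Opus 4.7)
The plan is to show that the two correction terms appearing in the expansion
\[
\frac{\omega(t_n)}{(n-2)\widetilde{s}(t_n)} \;=\; 1 \;-\; \frac{3}{(n-2)\widetilde{s}(t_n)} \;-\; \frac{\int_{2}^{n-2}\widetilde{s}(t_u)\,du}{(n-2)\widetilde{s}(t_n)},
\]
obtained from Lemma~\ref{Abellemma}, both tend to $0$. Proposition~\ref{limsup} already provides $\limsup \leq 1$; the matching $\liminf \geq 1$ will follow from the same identity once the integral term is shown to be $o(1)$. Since $s(t_n)\to\infty$, we have $\widetilde{s}(t_n)/s(t_n)\to 1$, so the statement with $s(t_n)$ in the denominator, as given in the proposition, is automatic once the version with $\widetilde{s}(t_n)$ is established.

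To control the integral I would feed the bounds of Theorem~\ref{Catalan} into numerator and denominator: $s(t_u) \leq u^{3/4}\log(u)\,4^{\sqrt{u}-1}/\sqrt{\pi}$ above, and $s(t_n) \geq n^{3/4}\,4^{\sqrt{n}-1}/\sqrt{\pi}$ below. The substitution $v=\sqrt{u}$ converts the numerator to a multiple of $\int_{\sqrt{2}}^{\sqrt{n-2}} v^{5/2}\log(v)\, 4^{v}\,dv$. Because $4^v$ grows exponentially, integration by parts (iterated against $4^v\,dv/\log 4$) yields the asymptotic
\[
\int_{2}^{n-2} s(t_u)\,du \;=\; O\!\left(\frac{(n-2)^{5/4}\log(n)\, 4^{\sqrt{n-2}}}{\log 4}\right),
\]
the lower-order corrections from the iteration being smaller by an extra factor of $v$.

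Dividing by $(n-2)\,s(t_n)$ and using the lower bound from Theorem~\ref{Catalan}, the exponential factors combine into $4^{\sqrt{n-2}-\sqrt{n}}$, which is bounded (and in fact tends to $1$) since $\sqrt{n-2}-\sqrt{n}=-2/(\sqrt{n-2}+\sqrt{n})\to 0$. The polynomial ratio reduces to $(n-2)^{5/4}/\bigl((n-2)\,n^{3/4}\bigr)=O(1/\sqrt{n})$, so the whole ratio is $O\bigl(\log(n)/\sqrt{n}\bigr)$, which vanishes. Combining with Proposition~\ref{limsup} closes the argument.

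The main obstacle is this integral estimate: one must verify that the single logarithmic gap between the upper and lower bounds of Theorem~\ref{Catalan} is absorbed by the polynomial decay $1/\sqrt{n}$ arising from the ratio $(n-2)^{1/4}/n^{3/4}$. This is what makes the argument work, and the margin is not large --- a substantially weaker form of Theorem~\ref{Catalan} would not be enough. Once this step is carried out, the conclusion follows from the sandwich between the upper bound of Proposition~\ref{limsup} and the lower bound just derived.
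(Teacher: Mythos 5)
Your proof is correct and follows essentially the same route as the paper: both expand via Lemma~\ref{Abellemma}, feed the two-sided bounds of Theorem~\ref{Catalan} into the integral term, evaluate $\int 4^{\sqrt{x}}\,dx$ by the substitution $u=\sqrt{x}$ and integration by parts, and conclude that the correction terms are $O(\log n/\sqrt{n})\to 0$. The only cosmetic difference is that the paper pulls the factor $x^{3/4}\log x$ out of the integral before integrating, whereas you keep it inside and integrate by parts against $4^v$; the resulting estimate is the same.
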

\begin{proof} By Lemma \ref{Abellemma}, write
$$\frac{\omega(t_n)}{(n-2)s(t_n)}=1-\frac{3}{(n-2)s(t_n)}-\frac{1}{(n-2)s(t_n)}\bigintss_{2}^{n-2}s(t_x) dx.$$
Let $\varepsilon>0$ and $x$ sufficiently large. Then, by Theorem \ref{Catalan}, for a large $x$, we have
\begin{eqnarray}\label{key-relation}
x^{\frac34}\frac{4^{\sqrt{x}-1}}{\sqrt{\pi}} \leq s(t_x)\leq x^{\frac34}
\log(x)\frac{4^{\sqrt{x}-1}}{\sqrt{\pi}}.
\end{eqnarray}
But
\begin{eqnarray}\label{int4u}
\bigintss_{2}^{n-2} 4^{\sqrt{x}} dx& \overset{u=\sqrt{x}}{=}&
\bigintss_{\sqrt{2}}^{\sqrt{n-2}} 2~u~ 4^{u} du \nonumber\\
&=&\Big[ \frac{2u}{\log(4)}4^u\Big]_{\sqrt{2}}^{\sqrt{n-2}}-\Big[ \frac{2}{\log^2(4)}4^u\Big]_{\sqrt{2}}^{\sqrt{n-2}} \nonumber\\
&=&\frac{2\sqrt{n-2}~.4^{\sqrt{n-2}}}{\log(4)}-\frac{2\sqrt{2}~.4^{\sqrt{2}}}{\log(4)}-
\frac{2~.4^{\sqrt{n-2}}}{\log^2(4)}+\frac{2~.4^{\sqrt{2}}}{\log^2(4)}.
\end{eqnarray}
Since $\frac{1}{(n-2)s(t_n)}$ vanishes at the infinity and $\widetilde{s}(t_n)$ is equivalent to
$s(t_n)$ we may assume that \eqref{key-relation} holds starting from
$2$ and Theorem \ref{Catalan} is valid for $\widetilde{s}(t_n)$. Therefore

\begin{eqnarray}\label{upbound}\displaystyle
\frac{1}{(n-2)\widetilde{s}(t_n)}\bigintss_{2}^{n-2} \widetilde{s}(t_x) dx
&\leq& \frac1{(n-2)(n-2)^{\frac34}~4^{\sqrt{n}}}\bigintss_{2}^{n-2} x^{\frac34}
\log(x)~4^{\sqrt{x}} dx \nonumber\\
&\leq& \frac{(n-2)^{\frac34}\log(n-2)}{(n-2)(n-2)^{\frac34}~4^{\sqrt{n}}}\bigintss_{2}^{n-2} 4^{\sqrt{x}} dx \nonumber\\
&\leq& \frac{\log(n-2)}{(n-2)4^{\sqrt{n}}}\bigintss_{2}^{n-2} 4^{\sqrt{x}} dx.
\end{eqnarray}
From \eqref{int4u} combined with \eqref{upbound} it follows that
\begin{eqnarray*}
&&\frac{1}{(n-2)\widetilde{s}(t_n)}\bigintss_{2}^{n-2} \widetilde{s}(t_x) dx
\leq\\&&\frac{2\sqrt{n-2}~.4^{\sqrt{n-2}}}{\log(4)} \times \frac{\log(n-2)}{(n-2)4^{\sqrt{n}}}+
\frac{2~.4^{\sqrt{2}}}{\log^2(4)}\times \frac{\log(n-2)}{(n-2)4^{\sqrt{n}}} \tendn 0.
\end{eqnarray*}
We conclude that
$$\frac{\omega(t_n)}{(n-2)\widetilde{s}(t_n)}\tendn 1.$$
which proves the proposition.
\end{proof}
It remains to prove Theorem \ref{Catalan}. For that we shall need the following classical lemma. The proof of it
can be found in \cite{Flajolet}. Nevertheless we include the proof for the sake of completeness.
\begin{lemma}\label{Catalan2}
$\displaystyle \left(
  \begin{array}{cc}
    2n \\
    n
  \end{array}
\right)\sim \displaystyle \frac{4^n}{\sqrt{\pi}\sqrt{n}}.$
\end{lemma}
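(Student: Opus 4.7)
The plan is a direct application of Stirling's approximation, which is the standard route for such central-binomial asymptotics. Recall that Stirling's formula states
\[
n! \sim \sqrt{2\pi n}\,\left(\frac{n}{e}\right)^n \qquad \text{as } n \to \infty.
\]
I would state this as the starting point (it is itself classical and can be cited from the same reference \cite{Flajolet} the lemma is attributed to, or from any standard analysis text).

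The next step is the algebraic substitution. I would write
\[
\binom{2n}{n} = \frac{(2n)!}{(n!)^2},
\]
apply the Stirling equivalent to $(2n)!$, giving $(2n)! \sim \sqrt{4\pi n}\,(2n/e)^{2n}$, and to $(n!)^2$, giving $(n!)^2 \sim 2\pi n\,(n/e)^{2n}$. Since asymptotic equivalence is preserved under quotients of positive sequences, I may divide these equivalents to obtain
\[
\binom{2n}{n} \sim \frac{\sqrt{4\pi n}\,(2n/e)^{2n}}{2\pi n\,(n/e)^{2n}}.
\]

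The final step is the simplification. The exponential factors collapse as $(2n)^{2n}/n^{2n} = 2^{2n} = 4^{n}\cdot 4^{n}/4^{n}=4^{n}$… more directly, $(2n/e)^{2n}/(n/e)^{2n} = 2^{2n} = 4^n$. The prefactor simplifies to $\sqrt{4\pi n}/(2\pi n) = 2\sqrt{\pi n}/(2\pi n) = 1/\sqrt{\pi n}$. Multiplying the two pieces gives exactly $4^n/\sqrt{\pi n} = 4^n/(\sqrt{\pi}\sqrt{n})$, which is the claimed equivalent. There is no real obstacle here beyond writing the arithmetic cleanly; the entire content of the lemma is Stirling plus bookkeeping. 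If one prefers to avoid quoting Stirling in its full strength, an alternative I would mention is Wallis' product $\prod_{k\geq 1}\frac{(2k)^2}{(2k-1)(2k+1)} = \pi/2$, from which $\binom{2n}{n}/4^n \sim 1/\sqrt{\pi n}$ follows by rewriting the product of odd/even ratios appearing in $\binom{2n}{n}/4^n = \prod_{k=1}^n (2k-1)/(2k)$ and estimating the tail; but this is longer and the Stirling route is the cleanest.
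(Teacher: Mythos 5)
Your proposal is correct and follows exactly the same route as the paper: both apply Stirling's formula $n!\sim n^n e^{-n}\sqrt{2\pi n}$ to $(2n)!$ and $(n!)^2$ and simplify the quotient to $4^n/(\sqrt{\pi}\sqrt{n})$. You merely spell out the bookkeeping (and mention a Wallis-product alternative) that the paper leaves implicit.
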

\begin{proof}By Stirling formula we have
\[
n!\sim n^n e^{-n}\sqrt{2\pi n}
\]
Hence
\[
\left(
  \begin{array}{cc}
    2n \\
    n
  \end{array}
\right)=\frac{2n!}{(n!)^2}\sim \frac{4^n}{\sqrt{\pi}\sqrt{n}}.
\]
This finishes the proof of the lemma.
\end{proof}
\begin{proof}[Proof of Theorem \ref{Catalan}]{} For $x \geq 2$, Write
\begin{eqnarray*}\displaystyle
s(t_x)&=& \sum_{n \leq x}\sum_{d|n}\left(
                                    \begin{array}{c}
                                      \frac{n}{d}+d-2 \\
                                      d-1\\
                                    \end{array}
                                  \right)\\
&=&\displaystyle\sum_{dq \leq x}\left(
\begin{array}{c}
                                      q+d-2 \\
                                      d-1\\
                                    \end{array}
                                  \right)\\
&=&\displaystyle\sum_{d=1}^{\lfloor x \rfloor}\sum_{q=1}^{\lfloor \frac{x}{d} \rfloor }
\left(\begin{array}{c}
                                      q+d-2\\
                                      d-1\\
                                    \end{array}
                                  \right)\\
\end{eqnarray*}
From this we see that
\begin{eqnarray}\label{CCatalan}\displaystyle
s(t_x)&\leq& \sum_{d=1}^{\lfloor x \rfloor}\lfloor \frac{x}{d} \rfloor
\Big(\begin{array}{c}
                                      2(\lfloor x \rfloor-1)\\
                                      \lfloor x\rfloor-1\\
                                    \end{array}
                                    \Big) \nonumber\\
&\leq & x \log(x)  \Big(\begin{array}{c}
                                      2(\lfloor x \rfloor-1)\\
                                      \lfloor x\rfloor-1\\
                                    \end{array}
                                    \Big),
\end{eqnarray}
and
\begin{eqnarray}\label{CCCatalan}\displaystyle
s(t_x)\geq \lfloor x\rfloor \Big(\begin{array}{c}
                                      2(\lfloor x \rfloor-1)\\
                                      \lfloor x\rfloor-1\\
                                    \end{array}
                                    \Big),
\end{eqnarray}
Using the relation $\lfloor x \rfloor= x+O(1)$ combined with \eqref{CCatalan} and \eqref{CCCatalan}, we obtain
$$ x \Big(\begin{array}{c}
                                      2(\lfloor x \rfloor-1)\\
                                      \lfloor x\rfloor-1\\
                                    \end{array}
                                    \Big)
\leq s(t_x) \leq x \log(x)
\Big(\begin{array}{c} 2(\lfloor x \rfloor-1)\\
                                      \lfloor x\rfloor-1\\
                                    \end{array}
                                    \Big).
$$
By Lemma \ref{Catalan2}, this gives
$$x^{\frac34}\frac{4^{\sqrt{x}-1}}{\sqrt{\pi}} \leq s(t_x)\leq x^{\frac34}
\log(x)\frac{4^{\sqrt{x}-1}}{\sqrt{\pi}},$$
which proves the theorem.
\end{proof}
Now we are able to give the proof of Theorem \ref{mainth1}.
\begin{proof}[of Theorem \ref{mainth1}]{}By Proposition \ref{equiva1}, it is sufficient to show that
 $$s(t_n)\log^2(s(t_n))\sim(n-2)s(t_n).$$
For that, observe that we have
\begin{eqnarray*}
\frac{s(t_n)\log^2(s(t_n))}{(n-2)s(t_n)}&=&\frac{\log^2(s(t_n))}{n-2}
\end{eqnarray*}
Applying Theorem \ref{mainth2}, we deduce that
\[
\log(s(t_n)) \sim \log(4)~\sqrt{n}.
\]
Whence
\[
\log^2(s(t_n)) \sim \log^2(4)~n.
\]
Hence
$$\frac{\log^2(s(t_n))}{(n-2)} \tendn \log^2(4).$$
We deduce that
$$\displaystyle \frac{\omega(t_n)}{s(t_n)\log^2 s(t_n)}\tendn \frac1{\log^2(4)}<1.$$
This finishes the proof of the theorem.
\end{proof}
\section{Conclusion}\label{Conclusion}
In this paper we show how binary trees can be used to design a fast algorithm for computing an automaton with a reduced\footnote{Asymptotically minimal.} number of transitions recognizing the language $L(E_n)$. We have verify that our algorithm gives the minimal number of transitions for $n=1 \mbox{ to } 7$ (see Table~\ref{tab:t1}) and we have shown that our reduction is asymptotically a minimization. Hence, we conjecture that Algorithm~\ref{alg:alg4} computes the minimal transition automaton. 
\subsubsection*{Acknowledgments.}Special thanks to Sa\"id Abdedda\"im, Alexis B\`es, Patrick C\'egielski, Jean-Marc Champarnaud and Yuri Matiyasevich.
\bibliographystyle{plain}
\bibliography{Bibliographie}
\appendix
\section{Asymptotic notations}\label{AsymptoticNotations}
Following \cite{Flajolet}, we employ the standard asymptotic notation called Bachmann–Landau notation as follows.
Let $\Se$ be a set and $s_0 \in \Se$ a particular element of $\Se$. We assume a notion of
neighbourhood to exist on $\Se$. Examples are $\Se = \Z_{>0} \bigcup \{+\infty\}$ with
$s_0 = +\infty$, $\Se =\R$
with $s_0$ any point in $\R$; $\Se = \C$ or a subset of $\C$ with $s_0 = 0$, and so on. Two functions
$f$ and $g$ from $\Se \setminus \{s_0\}$ to $\R$ or $\C$ are given.
\begin{itemize}
\item $O$-notation: write
$$f(s) \stackrel{s\rightarrow s_0}{=}O(g(s)),$$
\noindent{} if the ratio $\displaystyle \frac{f(s)}{g(s)}$ stays bounded as $s\rightarrow s_0$ in $\Se$. In other words, there
exists a neighborhood $V$ of $s_0$ and a constant $C > 0$ such that
$$|f(s)| < C |g(s)|,~~ s \in V, s \neq s_0.$$
One also says that "$f$ is of order at most $g$", or "$f$ is big-Oh of $g$"(as s
tends to $s_0$).
         \item $o$-notation: write
$$f(s) \stackrel{s\rightarrow s_0}{=} o(g(s)),$$
\noindent{} if the ratio $\displaystyle \frac{f(s)}{g(s)}$ tends to $0$ as $s\rightarrow s_0$ in $\Se$. In other words, for any
(arbitrarily small) $\varepsilon > 0$, there exists a neighborhood $\mathcal{V}_{\varepsilon}$ of $s_0$ (depending
on $\varepsilon$), such that
$$|f(s)| <\varepsilon |g(s)|,\quad\quad s \in \mathcal{V}_{\varepsilon}, s \neq s_0.$$
One also says that "$f$ is of order smaller than $g$, or $f$ is little-oh of $g$" (as $s$
tends to $s_0$).
\item $\sim$-notation: write
$$ f(s) \stackrel{s\rightarrow s_0}{\sim}g(s),$$
if the ratio $\displaystyle \frac{f(s)}{g(s)}$ tends to $1$ as $s \rightarrow s_0$ in $\Se$. One also says that "$f$ and
$g$ are asymptotically equivalent" (as $s$ tends to $s_0$).
 \item $\Omega$-notation: write
$$f(s) \stackrel{s\rightarrow s_0}{=} \Omega(g(s)),$$
\noindent{}if the ratio $\displaystyle \frac{f(s)}{g(s)}$ stays bounded from below in modulus by a non-zero
quantity, as $s \rightarrow s_0$ in $\Se$. Which means that there exists $k>0$ and a neighborhood $\mathcal{V}$ of $s_0$, such that
$$f(s) \geq k.g(s),~~ s \in \mathcal{V}.$$

 One then says that $f$ is of order at least $g$.
         \item $\theta$-notation: if $f(s)= O(g(s))$ and $f(s) =\Omega(g(s))$, write
$$ f(s) \stackrel{s\rightarrow s_0}{=} \theta(g(s)).$$
This implies that there exits $k,C>0$ and a neighborhood $\mathcal{V}$ of $s_0$, such that
$$k. g(s) \leq f(s) \leq C.g(s),~~ s \in \mathcal{V}.$$
One then says that $f$ is of order exactly $g$.
\end{itemize}
At this point we are able to make a parallel between the history of our contribution and the history of the famous Prime Number Theorem (PNT) which we shall use later in its weaker form. The PNT Theorem concerns the asymptotic behavior of the prime-counting function $\pi(x)=\left|\{p \leq x, p {\textrm{~~prime}}\}\right|$. Using asymptotic notation the PNT can be restated as
$$\pi(x)\sim\frac{x}{\ln x}.\! $$
The behavior of $\pi(x)$ has been the object of intense study by many celebrated mathematicians ever since the eighteenth centry.
Inspection of tables of primes led Gauss (1792) and Legendre (1798) to conjecture the PNT. In
1808 Legendre published the formula $\pi(x) = x/(log x + A(x))$, where $A(x)$ tends to a constant $B=-1.08366$ as $x \longrightarrow +\infty$, which means that $\pi$ is $\Omega(x/log(x))$.\\
According to Bateman and Diamond \cite{Bateman-Diamond}, The first person to establish the true order of $\pi(x)$ was P. L. Chebyshev. Indeed, in two papers from 1848 and 1850, Chebychev prove that $\pi(x)$ is $\theta(x/log(x)$. This result is known in nowadays as Chebychev Theorem.\\
Finally, in 1896 the PNT was first proved by Hadamard and de la Vall\'ee Poussin. Their proofs were long and intricate. A simplified modern presentation is given on pages 41-47 of Titchmarsh's book on the Riemann Zeta function \cite{Titchmarsh}.
\end{document}